\theoremstyle:=definition,remark,plain\do{%
        \expandafter\g@addto@macro\csname th@\theoremstyle\endcsname{%
            \addtolength\thm@preskip\parskip
            }%
        }
\definecolor{dnrbl}{rgb}{0,0,0.3}
\definecolor{dnrgr}{rgb}{0,0.3,0}
\definecolor{dnrre}{rgb}{0.5,0,0}
\theoremstyle{plain}
\newtheorem{thm}{Theorem}[section]
\newtheorem{lem}[thm]{Lemma}
\newtheorem{defi}[thm]{Definition}
\newtheorem{obs}[thm]{Observation}
\newtheorem{ques}[thm]{Question}
\let\c@table\c@figure
\renewenvironment{abstract}
 { \normalsize
  \list{}{
    \setlength{\leftmargin}{.0cm}%
    \setlength{\rightmargin}{\leftmargin}%
    }%
  \item {\bf \abstractname.} \relax}
 {\endlist}
\newtheorem*{rep@theorem}{\rep@title}
\newcommand{\newreptheorem}[2]{%
\newenvironment{rep#1}[1]{%
 \def\rep@title{#2 \ref{##1}}%
 \begin{rep@theorem}}%
 {\end{rep@theorem}}}
\title{The idemetric property: when most distances are   (almost) \newline the same}
\author{George Barmpalias$^{1}$ \and Neng Huang$^{2}$   \and Andrew Lewis-Pye$^{3}$ \and  Angsheng Li$^{4}$ \and  Xuechen Li$^{5}$ \and Yicheng Pan$^{4}$ \and Tim Roughgarden$^{6}$}
\date{January 15th, 2019}
\begin{document}
\maketitle
\begin{abstract}
We introduce the \emph{idemetric} property, which formalises the idea that most nodes in a graph have similar distances between them, and which turns out to be quite standard amongst small-world network models. Modulo reasonable sparsity assumptions, we are then able to show that a strong form of idemetricity is actually equivalent to a very weak expander condition (PUMP). This provides a direct way of providing short proofs that  small-world network models such as the Watts-Strogatz model  are strongly idemetric (for a wide range of parameters), and also provides further evidence that being idemetric is a common property.  

We then consider how satisfaction of the idemetric property is relevant to algorithm design.  
For idemetric graphs we observe, for example,  that a single breadth-first search provides a solution to the all-pairs shortest paths problem, so long as one is prepared to accept paths which are of stretch close to 2 with high probability. Since we are able to show that Kleinberg's model is idemetric, these results contrast nicely with the well known negative results of Kleinberg concerning efficient decentralised algorithms for finding short paths: for precisely the same model as Kleinberg's negative results hold, we are able to show that very efficient (and decentralised) algorithms exist if one allows for reasonable preprocessing.   
For deterministic distributed routing algorithms we are also able to obtain results proving that less routing information is required for idemetric graphs than in the worst case in order to achieve stretch less than 3 with high probability: while  $\Omega(n^2)$ routing information is required in the worst case for stretch strictly less than 3 on almost all pairs, for idemetric graphs the total routing information required is  $O(nlog(n))$.
\end{abstract}



\noindent $^{1}$State Key Laboratory of Computer Science, 
 Institute of Software, Chinese Academy of Sciences, Beijing, 100190, P. R. China. 

\noindent $^{2}$School of Computer Science, University of Chinese Academy of Sciences, Beijing, P. R. China. 

\noindent $^{3}$Department of Mathematics, London School of Economics, London, UK. 

\noindent $^{4}$State Key Laboratory of Software Development Environment, School of Computer Science, BeiHang University, 100191, Beijing,  P. R. China.

\noindent $^{5}$Department of Computer Science, University of Toronto, Canada. 

\noindent $^{6}$Department of Computer Science, Stanford, USA.


\thanks{}

\vfill \thispagestyle{empty}
\clearpage

\section{Introduction}
One of the basic tasks of network science is to identify properties which are common to most real-world networks of interest, from telecommunication and computer networks, to networks arising in biological or social contexts. A well known example of such a property is given by the fact that these networks tend to be \emph{small worlds}, i.e.\ nodes in real-world networks tend to have short paths between them. 
The phrase ``six degrees of separation'', which originated with the experiments of Stanley Milgram \cite{SM67} in the 60s, sums up the idea that two people normally have a short path of contacts connecting them. This idea has subsequently been extensively verified \cite{MK89,WS98,AJB99} and has since permeated popular culture.

    As well as the existence of short paths, however, a notable feature of many of these networks is that they have \emph{strongly concentrated distance distributions}. Most distances, in other words, are almost the same. Close to 60\% of Facebook user pairs are at distance 5 in the underlying friendship network \cite{RH16}, for example,  while over 80\% of actor pairs on the IMDB\footnote{The data for the Internet Movie Database can be downloaded at http://www.imdb.com/interfaces} are at a distance either 4 or 5.\footnote{The graph considered here is the same as that in the so called \emph{Kevin Bacon game}: nodes are actors or actresses, and there is an edge between two nodes if they have appeared in the same film.}  Perhaps to an even greater extent, this is a feature which also carries over to most small-world network models. Figure \ref{graphs1} shows examples of distance distributions for three of the best known small-world models, the Erd\H{o}s-R\'{e}nyi \cite{ER60}, Preferential Attachment \cite{BA99} and Watts-Strogatz \cite{WS98} models. As we shall see, in these and most other small-world models, such as  the  Chung-Lu \cite{CL02,CL03} and  Norros-Reittu \cite{NR06} models, the distance distributions can be \emph{proved} to become proportionately more concentrated (in a certain formal sense) as the size of the network increases.  
    
    In order to study this phenomenon in a formal setting, of course we need a mathematical definition which makes precise the idea that ``most distances are almost the same''. Definition \ref{iddef} below is new, and seems natural.

\begin{figure}[h!]
\includegraphics[scale=0.47]{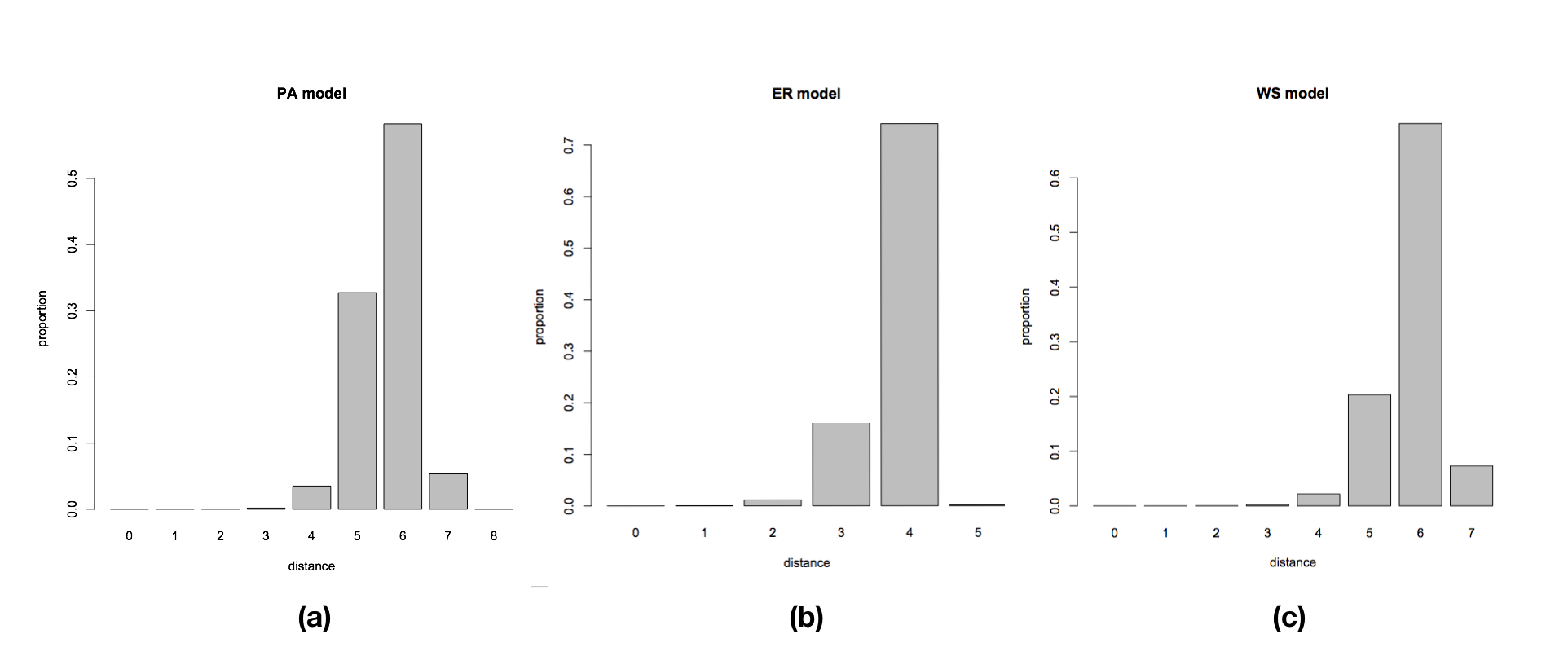}
\caption{These graphs show examples of distance distributions for three small-world network models. Distance is displayed on the $x$-axis, while the $y$-axis shows the proportion of pairs at that distance: (a) is for the Preferential Attachment (PA) model, with $10^6$ nodes and mean degree 6; (b) is for the Erd\H{o}s-R\'{e}nyi model $G(n,p)$, with $n=5\times 10^4$ nodes and expected degree 25; (c) is for the Watts-Strogatz model with $10^6$ nodes, in which each node is initially connected to 10 neighbours either side, and the rewiring probability is 0.2. }
 \label{graphs1}
\end{figure} 

 \noindent       Let $\mathbb{G}$ be a random network model for which distances are defined in the standard fashion,\footnote{ If $a$ and $b$ are nodes in a (possibly directed) graph, then we let $d(a,b)$ denote the length of a shortest path from $a$ to $b$ (where all edges are traversed in the forward direction if the graph is directed, and with $d(a,b)=\infty$ if no such path exists). In the case of a weighted graph, the above definition still applies, if the length of a path is defined to be the sum of all edge weights in the path. Note that in a directed graph it may not be the case that $d(a,b)=d(b,a)$. In an undirected graph, however, $d$ is a metric.} which produces for every $n\in \mathbb{N}$ (and perhaps taking other inputs, which we consider for now to be fixed) an ensemble of graphs with $n$ nodes, i.e.\ a probability distribution over a certain set of graphs with $n$ nodes.  For each $n$ we let $G_n$ be a graph sampled from the distribution given by $\mathbb{G}$. 
Recall that if $X$ is a random variable and $\{ X_n \}_{n\geq 0}$ is a sequence of random variables, then we say that $X_n$ tends to $X$ \emph{in probability}, denoted $X_n \xrightarrow{\enskip P \enskip} X$, if for every $\epsilon>0$, $ \mathbb{P}(|X_n-X|>\epsilon)\rightarrow 0$ as $n\rightarrow \infty$. 

\begin{defi} \label{iddef} We say  $\mathbb{G}$ is \textbf{idemetric} if there exists a finite valued function $f$ (i.e. $f(n)\neq\infty$),  such that if   $u_n$ and $v_n$ are nodes chosen uniformly at random from $G_n$, then $X_n=d(u_n,v_n)/f(n) \xrightarrow{\enskip P \enskip} 1$.\footnote{To be clear, $X_n$ is defined by first choosing $G_n$ according to the distribution given by $G$, and then choosing $u_n$ and $v_n$ uniformly at random from $G_n$.}
\end{defi} 

\noindent   For network models in which the graph produced may not be connected but does have a giant component, a weaker notion than idemetric is also useful. We shall say an event occurs \emph{with high probability} if it occurs with probability tending to 1 as $n\rightarrow \infty$.

\begin{defi} 
We say that a random network model $\mathbb{G}$ is \textbf{weakly idemetric} if both: 
\begin{enumerate} 
\item There exists $\kappa\in (0,1]$ such that, with high probability, the largest component of $G_n$  is of size at least $\kappa n$. 
\item There exists a function $f(n)$ such that if   $u_n$ and $v_n$ are nodes chosen uniformly at random from  the same largest component of $G_n$, then $X_n=d(u_n,v_n)/f(n)  \xrightarrow{\enskip P \enskip} 1$.
\end{enumerate} 
\end{defi} 

 \noindent Let us say that a network model is \emph{partially unbounded}  if there exists $\epsilon>0$ such that for all  $b$ the following holds for sufficiently large  $n$: if   $u_n$ and $v_n$ are nodes chosen uniformly at random from $G_n$, then $d(u_n,v_n)>b$ with probability $>\epsilon$.
 Many of the network models  which have been shown to be idemetric (without the term itself being used)  actually satisfy a property which is stronger, providing the model is partially unbounded:

\begin{defi} We say  $\mathbb{G}$ is \textbf{strongly idemetric}  (SI) if there exists a finite valued function $f$, and a bound $b_{\epsilon}$ for each $\epsilon>0$ (independent of $n$),   such that, for all sufficiently large $n$,  if   $u_n$ and $v_n$ are chosen uniformly at random from $G_n$, then $|f(n)-d(u_n,v_n)|<b_{\epsilon}$ with probability $>(1-\epsilon)$.  
\end{defi} 

So we actually have three closely related notions: weakly idemetric, idemetric and strongly idemetric. 
While the idemetric definition might seem sufficient to correctly capture the intended notion, the strongly idemetric definition will turn out to be at least as significant due to how commonly it is satisfied.  Immediate evidence that these definitions are of interest is then provided by the fact that  many of the best known small-world models have actually already been shown to be (weakly/strongly) idemetric. Classical results in graph theory (see \cite{RD06} or \cite{RH})  suffice to show, for example, that when the expected degree of each node is greater than 1, the Erd\H{o}s-R\'{e}nyi random graph $G(n,p)$ is weakly idemetric (although that term is not used in the literature). Bounded degree expanders are known to be strongly idemetric \cite{JSW09}. Similarly, without the term itself being applied, a range of inhomogeneous random graph models, such as the Preferential Attachment model of Barab\'{a}si and Albert \cite{BA99}, the Chung-Lu \cite{CL02, CL03} model and the Norros-Reittu model \cite{NR06}, have been shown to be idemetric for a wide range of parameters. In fact there are a number of ways of making the definition of the Preferential Attachment model precise, and a generalised model has often been considered (see, for example, \cite{DHH10}) which depends on two parameters, an integer $m \geq 1$ and a real $\delta > -m$: roughly speaking, in every step a new node is added to the network and connected to $m$ existing nodes with a probability proportional to their degree plus $\delta$. This produces a degree distribution with power law exponent $\tau= 3+\delta/m$, meaning that all values of $\tau>2$ are possible (which is the motivation for including $\delta$ in the definition of the model). Let the graph distance $H_n$ be the distance between two randomly chosen nodes in the giant component. When $m=1$ and $\delta>-1$, we have \cite{BP94,RH}: 

\[ \frac{H_n}{ \text{log} (n)}  \xrightarrow{\enskip P \enskip} \frac{2(1+\delta)}{2+\delta}. \]  

When $m\geq 1$ and $\delta>0$ (so that $\tau>3$) there exist $0 < a_1 < a_2 < \infty$ such that, as $n \rightarrow \infty$, $\mathbb{P}(a_1 \text{log} (n) \leq  H_n \leq  a_2 \text{log} (n)) = 1 - o(1)$. If $m \geq  2$ and $\delta = 0$ (so that $\tau =3 $) then \cite{BR04,RH}: 

\[  H_n \frac{\text{log log} (n)}{ \text{log} (n)}  \xrightarrow{\enskip P \enskip}  1. \] 

Finally, if $m \geq  2$ and $\delta \in (-m, 0)$ then \cite{DHH10, DMM12, RH}: 

\[ \frac{H_n}{\text{log log} (n)}  \xrightarrow{\enskip P \enskip} \frac{4}{| \text{log} (\tau - 2)|}. \]  

When $m\geq 2$ the graph produced is connected with high probability, meaning that these results suffice to give idemetricity. When $m=1$ the connectivity properties depend somewhat on the details of the model (such as whether self-edges are allowed), but the established results still suffice to establish that the corresponding models are at least weakly idemetric. For a detailed summary of these results together with self-contained proofs see \cite{RH}. 

Of course the Preferential Attachment model was introduced to try and explain the power law degree distributions observed in many real networks. An alternative and related approach which has been extensively studied, is to consider the configuration model applied to an i.i.d. sequence of degrees with a power-law degree distribution. According to this model one starts by sampling the degree sequence from a power law and then connects nodes with the sampled degrees at random (possibly resulting in a multigraph). The results obtained here are broadly similar, although constant multiplicative factors may differ \cite{HHM05,HHZ07}. 
When $\tau\in (2,3)$ graph distance centres around $2 \text{log log} (n)/| \text{log}(\tau - 2)|$, while for $\tau>3$  the graph distance grows logarithmically with $n$. 

In this paper we shall show further that the Watts-Strogatz model \cite{WS98}, and the Kleinberg model \cite{JK00} are both  idemetric for a wide range of parameters.

Given that it seems most (if not all) well-known small world models are at least weakly idemetric for a wide range of parameters,  we then  distinguish two clear aims. 

\begin{enumerate} 
\item In the spirit of the programme of research on quasi-random graphs by Fan Chung et al.\ \cite{CGW89,CG08}, which is aimed at showing that many conditions satisfied by random graphs are equivalent to each other, we would like to characterise the (weakly/strongly) idemetric network models. 
\item We would like to understand how one can make use of knowledge that the idemetric  property is satisfied. What is the significance for algorithm design? 
\end{enumerate}

In Section \ref{charsec} we address the first of these aims. Modulo reasonable sparsity assumptions, we establish the surprising fact that, for unweighted undirected networks, being  strongly idemetric is actually \emph{equivalent} to a very weak expander condition, which we call PUMP and which is described in Definition \ref{pump} below. This provides a direct way of providing short proofs that networks models are strongly idemetric, and we immediately apply it to establish that the Watts-Strogatz model is strongly idemetric for a wide range of parameters.  

\begin{defi} \label{ru} 
 For a node $u$, we write  $B_u(r)$ to denote the ball around $u$ of radius $r$, while $\overline{B_u(r)}$ denotes the complement of $B_u(r)$.  For sets of nodes $X,Y$, we write $e(X,Y)$ to denote the number of edges incident to one node in $X$ and one node in $Y$.
\end{defi}

\begin{defi} \label{pump}  $\mathbb{G}$ is a weak ball expander (PUMP) if whenever $0<\epsilon<\frac{1}{2}$ there exists $\alpha_{\epsilon}>0$ such that, for all sufficiently large $n$, if $u$ is chosen uniformly at random then with probability $>1-\epsilon$  both of the following hold: (i) there exists $r$ with $|B_u(r)|\geq \epsilon n$, (ii) for all $r$ with $ \epsilon n \leq |B_u(r)| \leq (1-\epsilon)n$,  $e(B_u(r), \overline{B_u(r)})\geq \alpha_{\epsilon}|B_u(r)| $. 
\end{defi} 

In Section \ref{pathf} we then address the second aim above. For idemetric network models we observe  that a single breadth-first search provides a solution to the all-pairs shortest paths problem, so long as one is prepared to accept paths which are of stretch close to 2 with high probability.\footnote{Recall that a path from $a$ to $b$ is of stretch $k$ if it is at most $k$ times as long as the shortest path.} The significance of this result stems from the broad class of networks to which it applies. One might hope, nevertheless, to be able to achieve smaller stretch, by further restricting the class of network models considered in a reasonable fashion: 

\begin{ques} Does there exist a condition which is satisfied by most/all well-known small world network models, and which allows for an efficient\footnote{By an efficient algorithm in this context we mean one with query times which are $O(\text{log}(n))$, and with preprocessing time $O(n)$ for sparse graphs.}  solution to the all-pairs shortest paths problem, with stretch $<2$? 
\end{ques} 

Since we are then able to show that Kleinberg's model is idemetric for a wide range of parameters, our results here contrast nicely with the well known negative results of Kleinberg \cite{JK00} concerning efficient decentralised algorithms for finding short paths: for precisely the same model as Kleinberg's negative results hold, we are able to show that very efficient (and decentralised) algorithms exist if one allows for reasonable preprocessing.  
 
For deterministic distributed routing algorithms we are also able to obtain results proving that less routing information is required for idemetric graphs than the worst case in order to achieve stretch less than 3 with high probability: while  $\Omega(n^2)$ routing information is required in the worst case for stretch strictly less than 3 on almost all pairs, for idemetric graphs the total routing information required is  $O(nlog(n))$.

\section{Characterising the idemetric network models} \label{charsec}

Throughout this section, we restrict attention to undirected, unweighted graphs. The results of this section thus apply to all small-world models mentioned in the paper other than the Kleinberg model, which is directed and is dealt with later.  If one is interested in real-world networks, then it makes sense to restrict attention further to graphs which are sparse, and in particular to network models in which individual nodes have finite expected degree in the limit. A natural way to formalise this is as follows. 

\begin{defi} \label{FED}
Let $\text{deg}(u)$ denote the degree of the node $u$. If $G_n$ is sampled from the distribution given by $\mathbb{G}$ and $u$ is chosen uniformly at random from $G_n$, then $\boldsymbol{p}_n(d)$ is the probability that $\text{deg}(u)=d$. 
 We'll say an undirected network model is of finite expected degree (FED) if there exists a probability distribution $\boldsymbol{p}$ on $\mathbb{N}$ with finite mean such that:
\begin{enumerate}  
\item[(F1)]  For each $d\in \mathbb{N}$, $  \text{lim}_{n\rightarrow \infty } \boldsymbol{p}_{n}(d) = \boldsymbol{p}(d)$; 
\item[(F2)] $\text{lim}_{n\rightarrow \infty } \sum_d d\boldsymbol{p}_{n}(d) = \sum_{d} d \boldsymbol{p}(d)$.
\end{enumerate} 
\end{defi}


So in Definition \ref{FED}, (F1) and (F2) just   say that $\boldsymbol{p}_{n}$ converges  nicely to $\boldsymbol{p}$ with finite mean as $n\rightarrow \infty$. 
FED is a natural sparsity condition which we can expect to be satisfied by the small-world network models we study (for `realistic' parameter values): all of the undirected network models mentioned in this paper satisfy the condition for a wide range of parameter values.   The following sparsity condition, however,  is the one we shall actually need in our proofs, and is implied by  FED.
 Let's  say a set of nodes $U$ is a \emph{node cover} for a set of edges $E$, if every edge from $E$ is incident with at least one node from $U$.

\begin{defi}   $\mathbb{G}$ is \textbf{uniformly sparse} (US)  if, for each $\epsilon >0$, there exists a constant $C_{\epsilon}$ such that the following holds with probability at least $1-\epsilon$ for all sufficiently large $n$: for any set of  $y\geq \epsilon n$ edges from $G_n$, every node covering is of size at least $y/C_{\epsilon}$. 
\end{defi} 

\begin{lem} \label{FIU} If $ \mathbb{G}$ is FED, then it is US. 
\end{lem}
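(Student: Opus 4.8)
The plan is to show that, in a FED model, no small set of nodes can be incident to too many edges, which is exactly the content of US. The crux is a degree-threshold decomposition of a node cover. If $U$ is a node covering for a set $E'$ of $y$ edges, then each edge of $E'$ is counted at least once on the left of
\[ y \;\leq\; \sum_{u\in U}\deg(u) \;\leq\; D\,|U| \;+\; S_D, \qquad\text{where}\qquad S_D := \sum_{v\,:\,\deg(v)>D}\deg(v), \]
the last inequality coming from splitting $U$ into nodes of degree $\leq D$ (contributing at most $D|U|$) and nodes of degree $>D$ (whose total degree is at most $S_D$). Hence $|U|\geq (y-S_D)/D$, and if we can arrange $S_D\leq y/2$ — which will follow from $S_D\leq \epsilon n/2$, since $y\geq\epsilon n$ — we obtain $|U|\geq y/(2D)$, so that $C_{\epsilon}=2D$ works for every node covering. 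Everything thus reduces to choosing a threshold $D=D(\epsilon)$ for which $S_D\leq\epsilon n/2$ with probability at least $1-\epsilon$, for all sufficiently large $n$.

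Next I would control the first moment $\mathbb{E}[S_D]=n\sum_{d>D}d\,\boldsymbol{p}_n(d)$. Since $\boldsymbol{p}$ has finite mean, its tail satisfies $\sum_{d>D}d\,\boldsymbol{p}(d)\to 0$ as $D\to\infty$; and for each \emph{fixed} $D$, (F1) gives $\sum_{d\leq D}d\,\boldsymbol{p}_n(d)\to\sum_{d\leq D}d\,\boldsymbol{p}(d)$ (a finite sum of convergent sequences), which together with (F2) yields $\sum_{d>D}d\,\boldsymbol{p}_n(d)\to\sum_{d>D}d\,\boldsymbol{p}(d)$. So, given $\epsilon>0$, I would first pick $D$ with $\sum_{d>D}d\,\boldsymbol{p}(d)<\epsilon^2/4$, and then pick $N$ so that $\sum_{d>D}d\,\boldsymbol{p}_n(d)<\epsilon^2/2$ for all $n\geq N$; for such $n$ this gives $\mathbb{E}[S_D]<\epsilon^2 n/2$.

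Markov's inequality then finishes the argument: for $n\geq N$ we get $\mathbb{P}(S_D\geq\epsilon n/2)\leq \mathbb{E}[S_D]/(\epsilon n/2)<\epsilon$, so with probability $>1-\epsilon$ we have $S_D<\epsilon n/2\leq y/2$ for every edge set of size $y\geq\epsilon n$, whence the displayed inequality gives that every node covering of such a set has size at least $y/(2D)$. Setting $C_{\epsilon}=2D$ (which depends only on $\epsilon$) completes the proof.

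I expect the only genuine content to be the degree-threshold decomposition of a node cover, together with the bookkeeping needed to pass from (F1)+(F2) to a uniform-in-$n$ bound on the truncated first moment $\sum_{d>D}d\,\boldsymbol{p}_n(d)$; once that is available, Markov's inequality suffices with room to spare, since US only asks for probability $1-\epsilon$ rather than a high-probability statement, so no concentration inequality is needed. One minor point worth noting is that US quantifies over \emph{all} node coverings, not just minimum ones — but the bound $|U|\geq y/(2D)$ obtained above holds for every node covering $U$, so this causes no difficulty.
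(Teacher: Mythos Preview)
Your proof is correct and in fact cleaner than the paper's own argument. Both proofs ultimately exploit the same fact --- that under FED the total degree carried by ``high-degree'' nodes is $o(n)$ --- but the route differs.

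The paper proceeds by contradiction: it fixes a small $\mu>0$ and shows that, with probability at least $1-\epsilon$, the $\lceil \mu n\rceil$ nodes of highest degree have total degree below $\epsilon n$. The contradiction is obtained via an inclusion--exclusion-style inequality $M_n \geq M_n(C_1)+M_n(C_2)-M_n(C_1\cap C_2)$ for certain formal node properties $C_1,C_2$. Having established that no set of $\mu_\epsilon n$ nodes can cover $\epsilon n$ edges, the paper then separately invokes (F2) to bound the total edge count by $K_\epsilon n$, and sets $C_\epsilon = K_\epsilon/\mu_\epsilon$.

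Your argument is more direct: you fix a degree threshold $D$, compute $\mathbb{E}[S_D]=n\sum_{d>D}d\,\boldsymbol{p}_n(d)$ explicitly from (F1)+(F2), and apply Markov to get $S_D\leq \epsilon n/2$ with probability $\geq 1-\epsilon$. The degree-threshold decomposition $y\leq D|U|+S_D$ then gives $|U|\geq y/(2D)$ immediately, with no need for a separate total-edge-count argument and no proof by contradiction. This buys you a shorter proof and a more transparent constant $C_\epsilon=2D$; the paper's route, on the other hand, phrases the conclusion in terms of the top $\mu n$ nodes, which might be marginally more portable to settings where one cares about the heaviest nodes specifically rather than all nodes above a fixed threshold.
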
 
\begin{proof} 
Let $\boldsymbol{p}$ be as guaranteed by satisfaction of FED. Define $M_n = \sum_{d}d\boldsymbol{p}_n(d)$ and $M=\sum_{d}d\boldsymbol{p}(d)$. By a \emph{formal property} $C$ of nodes, we mean a set of pairs $(G_n,u)$. For fixed $n\in \mathbb{N}$, if ($G_n$ is sampled according to the distribution given by $\mathbb{G}$ and) $u$ is chosen uniformly at random from $G_n$, then we define: 
\[ M_n(C)= \sum_{d} d\cdot \mathbb{P}[(G_n,u)\in C \ \&\  \text{deg}(u)=d].\] 
So $M_n(C)$ can be thought of as the contribution to the mean $M_n$ given by $C$.  For any formal properties $C_1$ and $C_2$, note that:  
\begin{equation} \label{close0} 
M_n \geq M_n(C_1)+M_n(C_2)-M_n(C_1\cap C_2).
\end{equation}

Given  $\epsilon>0$, let $D$ be such  $\sum_{d\leq D} d\boldsymbol{p}(d) >M-\epsilon^2/3$. Let $C_1$ be the set of all pairs $(G_n,u)$ such that $u\in G_n$ is of degree at most $D$. Satisfaction of FED guarantees that, for all sufficiently large $n$, $\sum_{d\leq D} d\boldsymbol{p}_n(d) >M_n-\epsilon^2/2$. 
So, to phrase this another way,  for all sufficiently large $n$ we have: 
\begin{equation} \label{close1}  M_n(C_1)>M_n-\epsilon^2/2. \end{equation} 
Now choose $\mu>0$ to be small, and suppose towards a contradiction that there exist infinitely many $n$ for which the following condition  holds: 
\begin{enumerate} 
\item[$(\star_n)$:] With probability at least $\epsilon$, the $\lceil \mu n \rceil$ many nodes of highest degree have total degree summing to at least $\epsilon n$.
\end{enumerate}
 Let $C_2$ be the set of all pairs $(G_n,u)$ such that the $\lceil \mu n \rceil$ many nodes of highest degree in $G_n$  have total degree summing to at least $\epsilon n$, and such that $u$ is one of one of those $\lceil \mu n \rceil$ nodes of highest degree in $G_n$. When $(\star_n)$ holds, we have that: 
 \begin{equation} \label{close2}  M_n(C_2)\geq \epsilon^2. \end{equation} 
 On the other hand, since $D$ is a bound on the degree of $u$ whenever $(G_n,u)\in C_1\cap C_2$, we also have: 
 \begin{equation} \label{close3} 
 M_n(C_1 \cap C_2) \leq D\mu. 
 \end{equation}
 So long as $\mu$ is chosen sufficiently small,  (\ref{close0})--(\ref{close3}) then produce the required contradiction. We conclude that, for an appropriate choice of $\mu>0$ and for all sufficiently large $n$, it holds with probability at least $1-\epsilon$ that the   $\lceil \mu n \rceil$ many nodes of highest degree have total degree summing to less than $\epsilon n$. Since this holds for all $\epsilon>0$, and since (F2) holds, we can then strengthen this statement slightly. For each $\epsilon>0$ we can choose $K_{\epsilon}\in \mathbb{N}$ such that, for an appropriate choice of $\mu_{\epsilon}>0$, the following holds with probability at least $1-\epsilon$ for all sufficiently large $n$:    the   $\lceil \mu_{\epsilon} n \rceil$ many nodes of highest degree have total degree summing to less than $\epsilon n$ and  the total number of edges in $G_n$ is less than $K_{\epsilon}n$. 
 
  Now if the $\lceil \mu_{\epsilon} n \rceil$ many nodes of highest degree have total degree summing to less than $\epsilon n$, it clearly holds that no set of $\lceil \mu_{\epsilon} n \rceil$ many nodes can act as a node cover for any set of  $\geq \epsilon n$ many edges. US is therefore satisfied for $C_{\epsilon}=K_{\epsilon}/\mu_{\epsilon}$. 
 \end{proof}

\noindent As mentioned in the introduction, it turns out that a characterisation of the strongly idemetric (SI) network models can be given  in terms of \emph{expander graphs}, which have been studied intensively by mathematicians and computer scientists since the 1970s and have applications in  the design and analysis of communication networks,  in the theory of error correcting codes and in pseudorandomness. For background on expanders  see \cite{HLW06}.  

\begin{defi} \label{exp} 
We say $G$ with $|G|=n$ is an $\alpha$-edge expander if the following holds for any set of nodes $S$ with $|S|\leq n/2$: $e(S,\bar S)\geq \alpha |S|$.
\end{defi}

\noindent Definition \ref{exp}, however, is too strong for our purposes, since it requires nice behaviour with respect to \emph{all} $S$ with $|S|\leq n/2$.   Definition \ref{pump}, restated below,  weakens Definition \ref{exp} by requiring that the given condition holds only when $S$ is a large ball $B_u(r)$, and even then only \emph{most of the time}. Since Definition \ref{pump} only describes conditions on $S$ with $\epsilon n \leq |S| \leq (1-\epsilon)n$, we can drop the restriction that $|S|\leq n/2$ without making the definition any stronger.

\begin{repdefi}{pump}  $\mathbb{G}$ is a weak ball expander (PUMP) if whenever $0<\epsilon<\frac{1}{2}$ there exists $\alpha_{\epsilon}>0$ such that, for all sufficiently large $n$, if $u$ is chosen uniformly at random then with probability $>1-\epsilon$  both of the following hold: (i) there exists $r$ with $|B_u(r)|\geq \epsilon n$, (ii) for all $r$ with $ \epsilon n \leq |B_u(r)| \leq (1-\epsilon)n$,  $e(B_u(r), \overline{B_u(r)})\geq \alpha_{\epsilon}|B_u(r)| $. 
\end{repdefi} 

\begin{defi} \label{leastr} 
We define $r_u(\epsilon)$ to be the least $r$ for which $|B_u(r)|\geq \epsilon n$, or to be undefined if no such $r$ exists.
\end{defi}

\begin{lem} \label{oneway}
If $\mathbb{G}$ is a PUMP  and US, then it is SI. 
\end{lem}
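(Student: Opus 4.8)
The plan is to show that a PUMP which is US must have strongly concentrated distances, by analysing how fast balls $B_u(r)$ grow. The key quantity is $r_u(\epsilon)$ from Definition~\ref{leastr}: I want to argue that for a $(1-O(\epsilon))$-fraction of nodes $u$, the radius $r_u(\epsilon)$ lands within a bounded window, and moreover once a ball has reached size $\epsilon n$ it expands to size $(1-\epsilon)n$ within a further $O(1)$ steps; hence $d(u,v)$ is essentially $2\,r_\bullet(\tfrac12)$ up to an additive constant. First I would use the PUMP property to fix, for a given small $\epsilon$, the expansion constant $\alpha_\epsilon$, and note that with probability $>1-\epsilon$ over the choice of $u$, clause~(ii) gives $|B_u(r+1)|\geq |B_u(r)|+e(B_u(r),\overline{B_u(r)})/(\text{something})\geq (1+\alpha'_\epsilon)|B_u(r)|$ for all $r$ with $\epsilon n\leq |B_u(r)|\leq (1-\epsilon)n$ — except that the naive bound $|B_u(r+1)|-|B_u(r)|\geq e(B_u(r),\overline{B_u(r)})$ is false, since many boundary edges can share endpoints in $\overline{B_u(r)}$. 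This is exactly where US enters: the set of edges crossing the cut is of size $\geq \alpha_\epsilon |B_u(r)|\geq \alpha_\epsilon\epsilon n$, so any node cover — in particular the set $B_u(r+1)\setminus B_u(r)$ of new nodes, which covers all these crossing edges on the far side — has size at least $\alpha_\epsilon|B_u(r)|/C_{\epsilon'}$ for the relevant US constant. Thus genuinely $|B_u(r+1)|\geq (1+\alpha_\epsilon/C_{\epsilon'})|B_u(r)|$, and geometric growth kicks in: from size $\epsilon n$ to size $(1-\epsilon)n$ takes at most some constant $L_\epsilon=\lceil \log_{1+\alpha_\epsilon/C_{\epsilon'}}(1/\epsilon)\rceil$ steps.

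Next I would combine this with clause~(i), which guarantees $r_u(\epsilon)$ is defined (the ball eventually reaches size $\epsilon n$) for a $(1-\epsilon)$-fraction of $u$. So for a $(1-2\epsilon)$-fraction of nodes $u$, writing $s = r_u(\epsilon)$, we have $|B_u(s)|\geq \epsilon n$ and $|B_u(s+L_\epsilon)|\geq (1-\epsilon)n$, hence $|B_u(s-1)| < \epsilon n$ by minimality. Now pick two nodes $u,v$ uniformly at random: with probability $\geq 1-4\epsilon$ both are ``good'' in the above sense. For good $u$ and good $v$ with radii $s=r_u(\epsilon)$, $t=r_v(\epsilon)$, the balls $B_u(s+L_\epsilon)$ and $B_v(s+L_\epsilon)$ each have size $\geq (1-\epsilon)n > n/2$, so (in an undirected graph) they intersect; thus $d(u,v)\leq s + t + 2L_\epsilon$. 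For the lower bound: $B_u(s-1)$ and $B_v(t-1)$ each have size $<\epsilon n$, so if $\epsilon<\tfrac12$ there are nodes outside both — but I need that $B_u(s-1)$ and $B_v(t-1)$ are far apart, which requires a bit more: if $d(u,v)\leq (s-1)+(t-1)$ then some node $w$ has $d(u,w)\leq s-1$ and $d(v,w)\leq t-1$... that doesn't immediately give a contradiction. Instead I would lower-bound $d(u,v)$ by noting $d(u,v)\ge s - 1 - (\text{radius needed for } v\text{'s ball to cover everything})$ is the wrong direction; the clean statement is: for good $v$, $|B_v(t+L_\epsilon)| \ge (1-\epsilon)n$, so $v$ is within $t+L_\epsilon$ of all but $\epsilon n$ nodes, giving an \emph{upper} bound on typical distances from $v$ but for the lower bound I use that a $(1-\epsilon)$ fraction of $u$ have $|B_u(s-1)|<\epsilon n<(1-\epsilon)n$, so at least $(1-\epsilon)n$ nodes are at distance $\ge s$ from $u$; combined with $v$ being random this should show $d(u,v)\ge s - O(1)$ for most pairs, and symmetrically.

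The main obstacle I anticipate is making the two-sided bound genuinely yield concentration, i.e.\ showing that the random quantity $s=r_u(\epsilon)$ (which a priori could vary wildly across good nodes) actually concentrates around a single value $f(n)$. The trick is that $d(u,v)$ being simultaneously $\le s+t+2L_\epsilon$ and $\ge \max(s,t)-O(1)$ for most pairs forces $s$ and $t$ to be within $O(1)$ of each other for most pairs $u,v$ — so all but an $\epsilon$-fraction of good nodes have $r_u(\epsilon)$ in a common window of width $O(1)$; call its lower end $g(n)$. Then set $f(n) = 2g(n) + L_\epsilon$ (or thereabouts) and check $|f(n) - d(u,v)| < b_\epsilon$ for a suitable constant $b_\epsilon$ absorbing $L_\epsilon$ and the window width, with probability $>1-\epsilon'$ after relabelling $\epsilon$. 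I would also need to confirm $f$ can be chosen independent of $\epsilon$ — the standard move is to define $f(n)$ via a fixed reference value of $\epsilon$ (say $\epsilon_0 = 1/4$) and then verify that for smaller $\epsilon$ the constants $b_\epsilon$ still work with this same $f$, since shrinking $\epsilon$ only enlarges the good set and the windows are nested up to $O(1)$. Finally, one checks the partial-unboundedness hypothesis is not actually needed for this direction (it is used for the converse), so the lemma stands as stated.
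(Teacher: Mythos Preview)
Your setup is right and matches the paper: PUMP gives $e(B_u(r),\overline{B_u(r)})\ge \alpha_\epsilon|B_u(r)|$, US turns edges into new vertices via the node-cover bound, and you get geometric growth $|B_u(r+1)|\ge(1+\beta)|B_u(r)|$ in the range $[\epsilon n,(1-\epsilon)n]$, hence $r_u(1-\epsilon)-r_u(\epsilon)\le L_\epsilon$ for a constant $L_\epsilon$. That part is fine.

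The gap is in your concentration step. Your upper bound comes from the two-ball intersection argument and gives only $d(u,v)\le s+t+2L_\epsilon$ with $s=r_u(\epsilon)$, $t=r_v(\epsilon)$; combined with your lower bound $d(u,v)\ge\max(s,t)$, you then claim these ``force $s$ and $t$ to be within $O(1)$ of each other''. They do not: $s=100$, $t=200$, $d(u,v)=250$ is perfectly consistent with $\max(s,t)\le d(u,v)\le s+t+2L_\epsilon$. So neither the concentration of $r_u(\epsilon)$ nor the bounded window for $d(u,v)$ follows, and your formula $f(n)=2g(n)+L_\epsilon$ is off by a factor of two as a symptom of the same issue.

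The fix is to drop the two-ball argument entirely and use the much sharper one-sided bound you already have: since $|B_u(s-1)|<\epsilon n$ and $|B_u(s+L_\epsilon)|\ge(1-\epsilon)n$, for a random $v$ we have $d(u,v)\in[s,\,s+L_\epsilon]$ with probability $\ge 1-2\epsilon$. By symmetry $d(u,v)\in[t,\,t+L_\epsilon]$ as well, and when both hold the intervals must overlap, giving $|s-t|\le L_\epsilon$ and $d(u,v)$ in a window of width $L_\epsilon$. This is exactly the content of the paper's proof: it phrases SI as ``$|d(u,v)-d(u',v')|<b_\epsilon$ for two random pairs'', defines $E_1$ as the set of pairs $\{u,v\}$ with $d(u,v)\in[r_u(\epsilon_1),r_u(1-\epsilon_1)]$ (and symmetrically), observes that adjacent pairs $\{u,v\},\{v,w\}\in E_1$ differ by at most $\tfrac13 b_\epsilon$ because both distances lie in $[r_v(\epsilon_1),r_v(1-\epsilon_1)]$, and then links any two pairs in a slightly refined set $E_2$ by a path of length three. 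Your corrected argument and the paper's transitivity argument are the same idea in different clothing; but as written your inequalities are too weak to close the loop.
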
 
\begin{proof}
 For each node $u$, consider the following condition: 
 \begin{quote} 
 $\star(u,\epsilon, \alpha_{\epsilon})$: There exists $r$ with $|B_u(r)|\geq \epsilon n$, and for all $r$ with $ \epsilon n \leq |B_u(r)| \leq (1-\epsilon)n$,  $e(B_u(r), \overline{B_u(r)})\geq \alpha_{\epsilon}|B_u(r)| $.
 \end{quote} 
To establish that $\mathbb{G}$ is SI, we use a condition which is equivalent to PUMP: 
 \begin{enumerate} 
 \item[P2]:  For every $\epsilon>0$, there exists $\alpha_{\epsilon}>0$ such that, for all sufficiently large $n$,  the following holds with probability $>1-\epsilon$: $\star(u,\epsilon,\alpha_{\epsilon})$ holds for at least  $(1-\epsilon)n$ many nodes in $G_n$.   
 \end{enumerate}
 Suppose $\mathbb{G}$ satisfies P2 and US.  Given arbitrary $\epsilon>0$, it suffices to show the existence of $b_{\epsilon}$,  such that for all sufficiently large $n$,  if   $u,v,u',v'$ are nodes chosen uniformly at random from $G_n$, then $|d(u,v)-d(u',v')|<b_{\epsilon}$ with probability $>(1-\epsilon)$.   To this end, let $\epsilon_1$ be sufficiently small compared to $\epsilon$ (we shall come back and specify precisely what this means later). Let $\alpha_{\epsilon_1}$ be the constant guaranteed by satisfaction of P2 w.r.t.\ $\epsilon_1$, and write $\alpha$ to denote $\alpha_{\epsilon_1}$. Let $C_{\alpha \epsilon_1}$ be the corresponding constant guaranteed by satisfaction of US, and write $C$ for $C_{\alpha \epsilon_1}$. Finally, to complete the initial round of definitions, put $\beta:= \alpha/C$. \\
 
 Fix a node $u$ and suppose that $\star(u,\epsilon_1,\alpha)$ holds. Very roughly, the idea now is as follows.  Let $r$ be such that $\epsilon_1 n \leq |B_u(r)| \leq (1-\epsilon_1)n$. 
 The expander property PUMP means we are guaranteed that $e(B_u(r), \overline{B_u(r)}) \geq   \alpha|B_u(r)|$. So long as it holds for any set of  $y\geq \alpha \epsilon_1 n$ edges from $G_n$, that every node covering is of size at least $y/C$, it then follows that $ |B_u(r + 1)| \geq (1 + \beta)|B_u(r)|$. This exponential growth means that most nodes will be within some constant range of distances from $u$. Then a simple transitivity argument can be used to show that this constant range is the same for most nodes $u$.

 Now let us describe the details of that argument more precisely. Since $\star(u,\epsilon_1,\alpha)$ holds, any ball $B_u(r_0)$ of size $s$ with $\epsilon_1 n \leq s \leq   (1-\epsilon_1)n$ has least $\alpha s$ edges coming out of it (i.e.\ $e(B_u(r),\overline{B_u(r)})\geq \alpha s$), which are incident with at least $\beta s$ distinct nodes outside $B_u(r)$ -- so long, that is,  as it holds for any set of  $y\geq \alpha \epsilon_1 n$ edges from $G_n$, that every node covering is of size at least $y/C$. $B_u(r_0 + 1)$ therefore has at least $(1 + \beta)s$ nodes. Iterating this argument, ball $B_u(r_0 + r)$ has at least $\text{min}((1 + \beta)^rs, (1-\epsilon_1)n)$ nodes, for any $r\in \mathbb{N}$. In particular, letting $r_0 = r_u(\epsilon_1)$, it follows that:
\begin{equation} \label{total}
 r_u(1-\epsilon_1) - r_u(\epsilon_1) \leq   \text{log}_{1+\beta}\left( \frac{1-\epsilon_1}{\epsilon_1}\right).
\end{equation}
Define $b_{\epsilon}:= 3 (\text{log}_{1+\beta}(1-\epsilon_1) - \text{log}_{1+\beta}(\epsilon_1))$. To present the transitivity argument referred to above, let $E_1$ be the set of all unordered node pairs $\{ u,v \}$ such that $\star(u,\epsilon_1,\alpha)$ and $\star(v,\epsilon_1,\alpha)$ both hold, and such that $r_u(\epsilon_1)\leq d(u,v) \leq r_u(1-\epsilon)$. Note that, so long as $\star(u,\epsilon_1,\alpha)$ holds for at least  $(1-\epsilon_1)n$ many nodes in $G_n$, there are at most $\epsilon_1 n^2 +2\epsilon_1n^2=3\epsilon_1 n^2$ many node pairs which do not belong to $E_1$. If at least $18\epsilon_1 n$ many nodes had degree less than $\frac{2}{3}n$ in $E_1$, i.e.\ for at least $18\epsilon_1 n$ many nodes $u$ there were less than $\frac{2}{3}n$ nodes $v$ with $\{ u,v \} \in E_1$, this would imply the existence of more than  $18\epsilon_1 n \cdot \frac{1}{3}n$ ordered node pairs not belonging to $E_1$, meaning more than $3\epsilon_1n^2$ unordered node pairs which don't belong to $E_1$. We conclude that less than  $18\epsilon_1 n$ many nodes have degree less than $\frac{2}{3}n$ in $E_1$. Define $E_2$ to be all those node pairs $\{ u,v \}\in E_1$, for which both $u$ and $v$ have degree at least $\frac{2}{3}n$ in $E_1$, so that there are less than $3\epsilon_1 n^2 + 18 \epsilon_1 n^2= 21 \epsilon_1 n^2$ many node pairs which do not belong to $E_2$. The point of defining $E_2$ this way is: 

\begin{enumerate} 
\item[$(\dagger)$]
If $\{ u,v \}, \{ w,x \} \in E_2$, then both $u$ and $w$ have degree at least $\frac{2}{3}n$ in $E_1$, meaning that there exists $y$ with $\{ u,y\} , \{ w,y\}\in E_1.$
\end{enumerate} 
Note that for any two node pairs that are adjacent in $E_1$, i.e.\ for any two node pairs in $E_1$ of the form $\{ u,v\} , \{ v,w\} $, the difference between their distances is at most $\frac{1}{3}b_{\epsilon}$, by (\ref{total}).  More generally, it then follows from $(\dagger)$, that if $\{ u,v\} , \{ w,x\} \in E_2$, the difference between their distances is at most $b_{\epsilon}$. 

To finish the proof, let us consider how $\epsilon_1$ should be defined so that $b_{\epsilon}$ suffices for the satisfaction of SI w.r.t.\ $\epsilon$. For all sufficiently large $n$, we were given that with probability at most $\epsilon_1$, it fails to be the case that $\star(u,\epsilon,\alpha_{\epsilon})$  holds for at least  $(1-\epsilon_1)n$ many nodes in $G_n$.  For all sufficiently large $n$, there is also probability at most $\alpha \epsilon_1$ that it fails to be the case that, for any set of  $y\geq \alpha \epsilon_1 n$ edges from $G_n$, every node covering is of size at least $y/C$. We can assume that $\alpha <1$. When we choose the two pairs  $\{ u,v\} $ and $\{ u',v'\} $, in order to ensure that $|d(u,v)-d(u',v')|<b_{\epsilon}$, we require both of these pairs to be in $E_2$. Since there are $n(n-1)/2$ unordered pairs, this happens with probability at least $\left(1- \frac{2\cdot 21\epsilon_1 n^2}{n(n-1)} \right)^2$. So for all sufficiently large $n$ the probability that at least one of the pairs is not in $E_2$ is at most $85 \epsilon_1$. It thus suffices that $\epsilon_1 <\epsilon/87$.  
  \end{proof}

\begin{lem} \label{theother}
 If $\mathbb{G}$ is SI and US, then it is a PUMP. 
\end{lem}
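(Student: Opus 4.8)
The plan is to prove this by running the equivalence backwards. Strong idemetricity is used to pin down the ``shape'' of the ball profile $r\mapsto |B_u(r)|$ around a typical node $u$; then, if some intermediate-sized ball had an abnormally small edge boundary, uniform sparsity would let one reach a constant fraction of $G_n$ from a set of only $\alpha_\epsilon n$ vertices in a constant number of steps, contradicting that shape.

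\textbf{Step 1 (shape from SI).} Fix $\epsilon\in(0,\tfrac12)$ and choose $\delta>0$ with $\sqrt\delta<\epsilon/4$; let $b_\delta$ be the bound supplied by SI. By Markov's inequality over the choice of $u$, the set $A$ of vertices $u$ such that fewer than $\sqrt\delta n$ vertices $v$ satisfy $|d(u,v)-f(n)|\ge b_\delta$ has $|A|\ge(1-\sqrt\delta)n$, and every $u\in A$ has both $|B_u(f(n)-b_\delta-1)|\le\sqrt\delta n$ and $|B_u(f(n)+b_\delta)|\ge(1-\sqrt\delta)n$. The latter is clause~(i) of PUMP (witnessed by $r=f(n)+b_\delta$); and since $\sqrt\delta<\epsilon$, the two bounds force every radius $r$ with $\epsilon n\le|B_u(r)|\le(1-\epsilon)n$ to satisfy $f(n)-b_\delta\le r\le f(n)+b_\delta-1$. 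So, for $u\in A$, all radii relevant to clause~(ii) lie in a window of width $2b_\delta$.

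\textbf{Step 2 (an intermediate ball is close to most of $G_n$).} Fix $u\in A$ and such an intermediate radius $r$, and put $S=B_u(r)$, $T=B_u(r+1)\setminus B_u(r)$. Since balls are BFS layers, $T$ is exactly the set of $\overline{S}$-endpoints of the edges of the cut $(S,\overline{S})$, so $|T|\le e(S,\overline{S})$ and $d(u,t)=r+1$ for every $t\in T$; and because a prefix of a shortest path is a shortest path, a shortest path from $u$ to $w\in\overline{S}$ enters $\overline{S}$ through $T$ after $r+1$ steps and never re-enters $S$, so $d(u,w)=(r+1)+d_{G[\overline{S}]}(T,w)$ for all such $w$. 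Combining $|B_u(f(n)+b_\delta)|\ge(1-\sqrt\delta)n$ with $|S|\le(1-\epsilon)n$ and $r\ge f(n)-b_\delta$, at least $(\epsilon-\sqrt\delta)n$ vertices of $\overline{S}$ have $d(u,w)\le f(n)+b_\delta$ and hence $d(T,w)\le 2b_\delta-1$ in $G_n$: the ball of radius $2b_\delta-1$ around $T$ has at least $(\epsilon-\sqrt\delta)n$ vertices.

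\textbf{Step 3 (US rules this out; assembling the probabilities).} Suppose clause~(ii) fails for $u$ at $r$: then $e(S,\overline{S})<\alpha_\epsilon|B_u(r)|<\alpha_\epsilon n$, so $|T|<\alpha_\epsilon n$. Define thresholds backwards: set $\eta_{2b_\delta-1}=\epsilon-\sqrt\delta$, and given $\eta_{j+1}$ let $C_{j+1}$ be the US constant for parameter $\eta_{j+1}/2$ and set $\eta_j=\eta_{j+1}/(4(1+C_{j+1}))$; each $C_{j+1}$ is a genuine constant, so $\eta_0>0$, and we put $\alpha_\epsilon:=\eta_0$. Running the BFS from $T$ in $G_n$: if $|B_T(j)|\le\eta_j n$ then these vertices form a node cover of the edges leaving $B_T(j)$, so the US property for parameter $\eta_{j+1}/2$ bounds their number by $\max(C_{j+1}\eta_j n,(\eta_{j+1}/2)n)=(\eta_{j+1}/2)n$ (the first term being $\le\eta_{j+1}n/4$ by the choice of $\eta_j$), whence $|B_T(j+1)|<\eta_j n+(\eta_{j+1}/2)n<\eta_{j+1}n$. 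Iterating from $|B_T(0)|<\eta_0 n$ gives $|B_T(2b_\delta-1)|<(\epsilon-\sqrt\delta)n$, contradicting Step~2. Hence clause~(ii) holds at every intermediate radius for every $u\in A$, on the event that the $\le 2b_\delta-1$ invoked instances of the US property hold; since the $\eta_j$ decay by a factor $\ge 4$ per step, the union bound over their failure probabilities is $\le\sum_j\eta_j/2<\tfrac23\epsilon$, and adding $\mathbb{P}[u\notin A]\le\sqrt\delta<\epsilon/4$, for all large $n$ a uniformly random $u$ satisfies both clauses of PUMP with probability $>1-\epsilon$ --- which is PUMP with $\alpha_\epsilon=\eta_0$.

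\textbf{Main obstacle.} The one delicate point is the growth estimate of Step~3. Invoking US at a single small parameter $\epsilon'$ would be fatal, since the US constant $C_{\epsilon'}$ may blow up as $\epsilon'\to 0$ while the ball around $T$ can grow geometrically across the (possibly wide) window of width $2b_\delta$. The remedy is to invoke US at a \emph{different} parameter at each of the constantly many BFS steps, determined by a backward recursion so that every constant that appears is evaluated at a fixed, bounded-away-from-zero argument (making $\alpha_\epsilon=\eta_0$ genuinely positive) and so that the parameters decay geometrically (making the union bound over US-failures cost only $O(\epsilon)$). Getting this bookkeeping right, together with the elementary BFS-layer facts of Step~2, is essentially the whole proof.
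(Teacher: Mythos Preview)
Your proof is correct and follows essentially the same approach as the paper's: both hinge on the observation that US, applied iteratively at a finite sequence of carefully chosen parameters, bounds how far a set of size $<\alpha n$ can expand in a fixed number of BFS steps (your $\eta_j$ recursion is the paper's $f(k,\epsilon_1)$ recursion). The only organisational difference is that the paper argues by contrapositive---assuming PUMP fails at some $\epsilon_0$, choosing $\alpha$ as a function of a given $b_\epsilon$, and then exhibiting near/far node pairs $N_u,F_u$ whose distances differ by more than $b_\epsilon$---whereas you argue directly, using SI to pin down the ball profile in a window of width $2b_\delta$ and then reading off $\alpha_\epsilon$ explicitly; this lets you avoid the $N_u$/$F_u$ case split but is otherwise the same argument.
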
 
\begin{proof}
Given US, suppose that the condition PUMP fails to hold. So there exists $\epsilon_0>0$, such that for all $\alpha>0$ there exist infinitely many $n$ for which the following holds with probability at least $\epsilon_0$: for $u$ chosen uniformly at random  $\star(u,\epsilon_0,\alpha)$ fails to hold.    It suffices to show that if $\epsilon$ is chosen small enough compared to $\epsilon_0$, then, given any $b_{\epsilon}\in \mathbb{N}$ there exist infinitely many $n$  such that if   $u,v,u',v'$ are nodes chosen uniformly at random from $G_n$ then, with probability $\geq\epsilon$, either one of the distances $d(u,v), d(u'v')$ is infinite or else $|d(u,v)-d(u',v')|>b_{\epsilon}$. Choose $\epsilon <(\epsilon_0)^2/4$  and suppose given $b_{\epsilon}>0$. 
For any node $u$, let $N_u$ be the  $n\epsilon_0/2$ nodes closest to $u$ (with ties broken arbitrarily) and let $F_u$ be the $n\epsilon_0/2$ nodes furthest from $u$.  If  $\star(u,\epsilon_0,\alpha)$ fails then this happens either for reason (1), that there does not exist $r$ with $|B_u(r)|\geq \epsilon_0 n$, or for reason (2), that there exists $r$ with  $ \epsilon_0 n \leq |B_u(r)| \leq (1-\epsilon_0)n$, such that  $e(B_u(r), \overline{B_u(r)})< \alpha |B_u(r)| $. In the case that  $\star(u,\epsilon_0,\alpha)$ fails for reason (1), $v$ is then chosen such that $d(u,v)$ is infinite with probability $>1-\epsilon_0$, which we can assume is greater than $\epsilon_0/4$. 
It remains to show that if $\alpha>0$ is chosen small enough, and if $\star(u,\epsilon_0,\alpha)$ fails for reason (2), then with probability at least $1/2$, all nodes in $N_u$ are at a distance greater than $2 b_{\epsilon}$ from all nodes in $F_u$: In that case there exists $A\in \{ N,F \}$ for which  we are guaranteed $|d(u,v)-d(u',v')|>b_{\epsilon}$ so long as $v$ is chosen in $A_u$.

To establish the claim we make use of US. For a set of nodes $U$, we let $B_U(r)$ denote $\cup_{u\in U} B_u(r)$. For any $\epsilon_1>0$, let $C_{\epsilon_1/2}>1$ be the constant guaranteed by satisfaction of US with respect to $\epsilon_1/2$, and define $f(1,\epsilon_1)=\epsilon_1/2C_{\epsilon_1/2}$. So long as a node covering of any set of $n\epsilon_1/2$ many edges must be of size at least $nf(1,\epsilon_1)\ (<n\epsilon_1/2)$, we conclude that $U$ must be of size at least $nf(1,\epsilon_1)$ in order for it to be the case that $B_U(1) \geq n\epsilon_1$.  Now we can iterate this idea. For any given $\epsilon_1>0$ and $k\in \mathbb{N}$, let $\beta_0$ be such that $\epsilon_1/f(k,\epsilon_1)=\beta_0^k$. Let $C_{f(k,\epsilon_1)/2}$ be the constant guaranteed by satisfaction of US with respect to $f(k,\epsilon_1)/2$, and let $\beta_1$ be the maximum of $\beta_0$ and $2C_{f(k,\epsilon_1)/2}$. Then we define $f(k+1,\epsilon_1)= \epsilon_1/\beta_1^{k+1}$. Now define  $\alpha =f(2 b_{\epsilon},\epsilon_0/4)$, where  we can assume $b_{\epsilon}\in \mathbb{N}$.  US ensures that for all sufficiently large $n$, the following occurs with probability $>1-\alpha$: 
\begin{enumerate} 
\item[$(\dagger_{\alpha})$] For all $U$, if $|U|<\alpha n$ then $|B_U(2 b_{\epsilon})|<\epsilon_0 n/4$.
\end{enumerate}
  If  $\star(u,\epsilon_0,\alpha)$ fails for reason (2), then there exists $r$ with $\epsilon_0 n\leq |B_u(r)| \leq (1-\epsilon_0)n$ for which $|B_u(r+1)-B_u(r)|<\alpha n$, which,  so long as $(\dagger_{\alpha})$ holds, means that $r_u(1-\epsilon_0/2)>r+2 b_{\epsilon}\geq r_u(\epsilon_0/2)+2 b_{\epsilon}$. Since we can assume $\alpha<1/2$, this suffices to establish the required claim, that with probability at least $1/2$, all nodes in $N_u$ are at a distance greater than  $2 b_{\epsilon}$ from all nodes in $F_u$. 
\end{proof} 

Lemmas \ref{FIU}, \ref{oneway} and \ref{theother} give us the promised  characterisation. 

\begin{thm}  \label{characterise} 
If $\mathbb{G}$ is FED, then it is strongly idemetric iff it is a PUMP. 
\end{thm}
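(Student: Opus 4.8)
The plan is to observe that this theorem is simply the assembly of the three lemmas already established in this section, so the proof is a short deduction rather than a new argument. First I would invoke Lemma \ref{FIU} to upgrade the hypothesis: since $\mathbb{G}$ is FED, it is also US. This is the key reduction, because both of the implications we need have US as an auxiliary hypothesis, and FED is exactly what lets us discharge it.

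With US in hand, the forward direction (SI $\Rightarrow$ PUMP) follows immediately from Lemma \ref{theother}: $\mathbb{G}$ is SI and US, hence a PUMP. The reverse direction (PUMP $\Rightarrow$ SI) follows immediately from Lemma \ref{oneway}: $\mathbb{G}$ is a PUMP and US, hence SI. Combining the two implications gives the biconditional ``strongly idemetric iff PUMP'' under the standing assumption FED, which is exactly the statement of Theorem \ref{characterise}.

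Since the theorem is obtained by chaining three previously proven lemmas, there is no genuine obstacle at this stage; all of the real work has been front-loaded. If I were to flag where the difficulty actually lies, it is in Lemma \ref{oneway} (the two-step argument: first showing that the PUMP expansion condition forces exponential ball growth and hence bounds $r_u(1-\epsilon_1)-r_u(\epsilon_1)$ by a constant via (\ref{total}), and then the transitivity bookkeeping with $E_1,E_2$ and property $(\dagger)$ to transfer this constant range to most pairs) and in Lemma \ref{theother} (the iterated use of US to build the family of functions $f(k,\epsilon_1)$ witnessing that small node sets cannot have large $2b_\epsilon$-balls, so that a PUMP failure produces pairs whose distances differ by more than $b_\epsilon$). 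Lemma \ref{FIU} contributes the sparsity bridge from the natural FED condition to the technically convenient US condition. The present theorem merely records that these fit together.
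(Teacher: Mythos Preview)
Your proposal is correct and matches the paper's own proof exactly: the theorem is deduced by combining Lemmas \ref{FIU}, \ref{oneway} and \ref{theother}, with Lemma \ref{FIU} supplying US from FED so that the other two lemmas apply in each direction. Your additional commentary on where the substantive work lies in those lemmas is accurate but not required for the proof itself.
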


Along with the Erd\H{o}s-R\'{e}nyi random graph and the Preferential Attachment model of Barab\'{a}si and Albert, the Watts-Strogatz model  is one of the best known and most studied small-world models. 
The definition is given in the Appendix, and depends on two parameters $p$ and $m$: roughly, $p$ is the rewiring probability, while $m$ is the number of neighbours on each side prior to rewiring.   

Using the characterisation given by Theorem \ref{characterise}, we get the following result. The proof appears in the Appendix.   While the proof is simpler in the case that $pm>1$, this condition can presumably be significantly weakened.

\begin{thm} \label{WS} The Watts-Strogatz model is strongly idemetric when $pm>1$.  
\end{thm}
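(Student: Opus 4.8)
The plan is to apply Theorem~\ref{characterise}. First I would check that the Watts--Strogatz model is FED, which is routine: the degree of a uniformly random node is a bounded contribution from the surviving lattice edges plus a Poisson-type contribution from the rewired edges, and this converges to a distribution on $\mathbb N$ with finite mean. By Theorem~\ref{characterise} it then suffices to prove that the model is a PUMP whenever $pm>1$. Condition~(i) of Definition~\ref{pump} I would handle by a standard analysis of the component structure: since $pm>1$ the rewired edges already percolate, so whp all but at most $\epsilon n$ nodes lie in a component of size at least $\epsilon n$, and hence a uniformly random $u$ has some $B_u(r)$ with $|B_u(r)|\ge\epsilon n$ with probability at least $1-\epsilon$.

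The substance of the proof is condition~(ii). Classify each edge of $G_n$ as a surviving lattice edge or a rewired (``long-range'') edge, and for a long-range edge call the non-random endpoint its \emph{anchor}; there are $\Theta(n)$ long-range edges, and each has an endpoint that is uniform and independent of the lattice structure. I would prove the following, which is strictly stronger than PUMP~(ii) and implies it at once, since it holds for \emph{every} qualifying set, in particular for every ball around every node, with $\alpha_\epsilon n\ge\alpha_\epsilon|B_u(r)|$: whp, for each $\epsilon>0$ there is $\alpha_\epsilon>0$ with $e(S,\bar S)\ge\alpha_\epsilon n$ for all $S$ satisfying $\epsilon n\le|S|\le(1-\epsilon)n$. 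For a \emph{fixed} such $S$ this is easy: the number of long-range edges anchored in $S$ concentrates near $pm|S|$ (a count not involving the random endpoints), and, conditional on the lattice and the rewiring pattern, each such edge has its uniform endpoint outside $S$ — and so lies in the cut — independently with probability at least $|\bar S|/n\ge\epsilon$; a Chernoff bound then gives at least $\tfrac12 pm\epsilon^2 n$ crossing edges except with probability $e^{-cn}$, where $c=c(\epsilon,pm)>0$, and here $pm>1$ (more precisely, $pm$ bounded away from $0$) enters to make this exponent positive. The remaining work is the union bound over the candidate sets $S$, which must \emph{not} be taken over all $2^n$ subsets, but either over the sub-exponential family of sets with small total cut (so that, once $\alpha_\epsilon$ is chosen small relative to $c(\epsilon,pm)$, their number is beaten by $e^{-cn}$) or, after exposing the rewiring pattern first and the uniform endpoints last — i.e. running the breadth-first search from $u$ with deferred decisions — over the $O(n^2)$ pairs $(u,r)$.

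The main obstacle I anticipate is precisely this last point: the candidate sets $B_u(r)$ are themselves built by a breadth-first search that \emph{uses} the long-range edges, so they are not independent of the very edges meant to certify their expansion, and a naive union bound either fails for lack of room in the exponent or is simply not legitimate (the per-set estimate was proved for a fixed set). Making it rigorous — by restricting to a structured, sub-exponential family of candidate cuts, or by ordering the exposure of randomness so that the boundary of $B_u(r)$ is revealed as fresh, conditionally uniform endpoints — is where the real care is needed. Everything else (FED, the reduction via Theorem~\ref{characterise}, the component estimate for~(i), and the first-moment/Chernoff bound for a single set) is routine.
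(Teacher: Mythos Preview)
Your overall strategy matches the paper: invoke Theorem~\ref{characterise}, verify FED (the paper does this essentially as you describe, with the random-endpoint contribution converging to a Poisson law with mean $pm$), and then establish PUMP. The difference is in how PUMP is argued.

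You aim for the stronger assertion that \emph{every} set $S$ with $\epsilon n\le|S|\le(1-\epsilon)n$ has edge boundary at least $\alpha_\epsilon n$, correctly flag the union-bound obstacle and the circular dependence between $B_u(r)$ and the long-range endpoints, and correctly name deferred decisions as the cure. The paper does not attempt the all-sets statement at all. It runs the BFS from the single random $u$ sequentially, tracking the sphere $B_u(r)\setminus B_u(r-1)$. At step $r$ the fresh randomness consists of the nodes outside $B_u(r)$ whose rewired edge has its random endpoint on the current sphere; the expected number of such nodes per sphere vertex is at least $pm(1-\epsilon)$, and this is precisely where $pm>1$ is used: it gives a growth factor $\beta>1$, so Chernoff yields that the next sphere is at least $\beta$ times as large, with failure probability $\exp(-I_\beta\cdot|\text{current sphere}|)$. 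A short preliminary lemma (using only lattice edges and independence of the rewiring decisions) gets the sphere to any prescribed constant size $N$, after which these failure probabilities decay geometrically and sum to less than $\epsilon$. This yields condition~(i) directly, and also shows the sphere itself has order $n$ by the time $|B_u(r)|\ge\epsilon n$; one further Chernoff step then gives~(ii). There is no union bound over sets or over $(u,r)$ pairs --- it is a single sequential sum.

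Two places where your route would need repair. First, your fixed-set Chernoff for~(ii) only needs $pm>0$, so in your scheme the hypothesis $pm>1$ must be spent entirely on~(i); but ``the rewired edges percolate'' is not enough as stated, since their giant covers only a fraction $\theta(pm)<1$ of the vertices, whereas you need all but $\epsilon n$ nodes in a large component for \emph{every} $\epsilon>0$ --- the surviving lattice edges have to do real work there. Second, your fix~(b), a union over $O(n^2)$ pairs $(u,r)$, does not go through as phrased: the bound deferred decisions actually delivers at radius $r$ is exponential in the current sphere size, not in $n$, so you cannot afford even $n$ such terms until the sphere is large. The paper's bootstrap-then-sum-geometrically structure is exactly what closes this gap.
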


\section{Path finding}  \label{pathf} 
In this section, we consider how satisfaction of the idemetric property may be useful for algorithm design, and in particular for path finding. 

 The task of finding the shortest path between two nodes in a (possibly weighted and/or directed) graph is a fundamental problem in computer science, which has been extensively studied since the 1950s. Since often one would like to make queries for multiple pairs of nodes in the same graph, a number of variants of the problem also become significant:

\emph{The Single Source Shortest Paths (SSSP) Problem}.  The well known algorithm of Dijkstra \cite{ED59} was originally formulated to give the shortest path between a single pair of nodes, but the version which is now better known solves the SSSP problem:   a single node is fixed as the ``source'' node and the algorithm then finds shortest paths from the source to all other nodes in the graph.  For a graph with $n$ nodes and $m$ edges, this algorithm for the SSSP problem terminates in time $O(n^2)$, but has been repeatedly improved upon.  Thorup's algorithm\footnote{Here and throughout the paper we work under standard RAM model assumptions.} \cite{MT99}, for example,  runs in time $O(m)$ for undirected weighted graphs with non-negative integer weights. For undirected unweighted graphs, a simple breadth-first search suffices, terminating in time $O(n)$ for sparse graphs.

\emph{The All Pairs Shortest Paths (APSP) Problem}. For the APSP problem one is required to output a data structure encoding all shortest paths between any pair of nodes. When  a pair of nodes $(a,b)$ is queried, the data structure should return a shortest path from $a$ to $b$ in time $O(\ell)$,  where $\ell$ is the number of edges on this path. The classic Floyd-Warshall algorithm \cite{RF62,SW62} solves this problem in time $O(n^3)$ for directed graphs.  Running Thorup's solution to the SSSP for each node in an undirected graph gives an $O(nm)$ algorithm, which is much more efficient when the graph is sparse. For dense directed graphs, on  the other hand,  Williams' algorithm \cite{RW14} is the best known, running in time $O(n^3/2^{\Theta(\text{log}n)^{1/2}})$.

As pointed out by Thorup and Zwick \cite{TZ05}, however, there are many contexts in which the above solutions to the APSP problem are not satisfactory. Quite simply, the time and space complexity bounds provided by these algorithms are not practical for many real-world graphs of interest. One would like algorithms for which the preprocessing time -- the time required to produce the data structure -- is close to linear in $n$, or ideally,  which can be run efficiently in an online and decentralised fashion, responding to changes in the graph structure as they occur. It therefore becomes natural to consider ways in which one can reasonably make the problem easier, and thereby obtain more efficient solutions. One option is to accept \emph{approximate} solutions, i.e.\ algorithms which produce paths which are close to optimal.  This can be made precise by requiring paths of small \emph{stretch}, where a path from $a$ to $b$ is of stretch $k$ if it is at most $k$ times as long as the shortest path.  
Another way in which to make the problem easier is to restrict attention to graphs with convenient properties. This will be a reasonable thing to do, so long as we consider properties which one can expect to be satisfied by graphs arising from real-world networks.  Of course, our interest here is in the extent to which the restriction to idemetric networks us useful.


 \subsection{Finding short paths in idemetric networks} \label{sp1} For now let us restrict attention to unweighted graphs, although similar arguments can be made for the weighted case. If $\mathbb{G}$ is idemetric then we can obtain an approximate solution to the APSP problem very simply as follows. For each $n$ let $G_n$ be a graph generated  according to $\mathbb{G}$ with $n$ nodes, and let $r_n$ be a \emph{beacon} which is chosen uniformly at random in $G_n$. We can then carry out a breadth-first search from  $r_n$, and then again with edge directions reversed if the graph is directed,  in order to find, for all $a\in G_n$, a shortest path $p(a,r_n)$ from $a$ to $r_n$ and a shortest path $p(r_n,a)$  from $r_n$ to $a$. In network models with the small-world property, each node $a$ can then store these short paths $p(a,r_n)$ and $p(r_n,a)$, taking space at most $O(\text{polylog}(n))$ for each node. For $a_n$ and $b_n$ chosen uniformly at random from the nodes in $G_n$, let $\ell_n$ be the length of the path from $a_n$ to $b_n$ given by concatenating $p(a_n,r_n)$ and $p(r_n,b_n)$. Since $\mathbb{G}$ is idemetric, we then have that: 
\[ \frac{\ell_n}{d(a_n,b_n)}\rightarrow 2 \text{ in probability. } \]

\noindent We therefore have: 
\begin{obs} \label{ob} If $\mathbb{G}$ is idemetric  then the all-pairs shortest path problem can be reduced to the single-source shortest path problem, so long as one is prepared to accept solutions which are of stretch $2+o(1)$ with high probability.
\end{obs}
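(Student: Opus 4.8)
The plan is to formalize the construction already indicated in the paragraph preceding the statement, and to present it as a Turing-style reduction. Given an SSSP solver, preprocess $G_n$ by choosing a single \emph{beacon} $r_n$ uniformly at random, running one breadth-first search from $r_n$ (and a second BFS with all edge directions reversed, in the directed case) to record for every node $a$ a shortest path $p(a,r_n)$ from $a$ to $r_n$ and a shortest path $p(r_n,a)$ from $r_n$ to $a$. A query $(a,b)$ is answered by returning the concatenation of $p(a,r_n)$ with $p(r_n,b)$. The only nontrivial computation is the call(s) to the SSSP solver — hence the reduction — and the returned path is output in time $O(\ell)$ in its own length $\ell$, as the APSP specification demands.

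For correctness, the first step I would take is the distributional observation that underpins everything: if $r_n$ is drawn uniformly at random and then $a_n,b_n$ are drawn uniformly at random (independently), then each of the ordered pairs $(a_n,r_n)$, $(r_n,b_n)$ and $(a_n,b_n)$ has exactly the marginal distribution of a pair of independent uniformly random nodes of $G_n$. Consequently the idemetric hypothesis (Definition \ref{iddef}) applies to each of them separately: for every $\epsilon>0$ we have $\mathbb{P}\big(|d(a_n,r_n)/f(n)-1|>\epsilon\big)\to 0$, and likewise with $d(a_n,r_n)$ replaced by $d(r_n,b_n)$ or by $d(a_n,b_n)$. A union bound over these three events then gives that, with probability tending to $1$, all three of $d(a_n,r_n)$, $d(r_n,b_n)$ and $d(a_n,b_n)$ lie in the interval $[(1-\epsilon)f(n),(1+\epsilon)f(n)]$; since $f$ is finite-valued, on this event the recorded paths genuinely exist and the concatenation is well defined.

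On that event the returned path has length $\ell_n=d(a_n,r_n)+d(r_n,b_n)$, so $2(1-\epsilon)f(n)\le \ell_n\le 2(1+\epsilon)f(n)$ while $(1-\epsilon)f(n)\le d(a_n,b_n)\le (1+\epsilon)f(n)$, and dividing yields $\tfrac{2(1-\epsilon)}{1+\epsilon}\le \ell_n/d(a_n,b_n)\le \tfrac{2(1+\epsilon)}{1-\epsilon}$. As $\epsilon>0$ is arbitrary this shows $\ell_n/d(a_n,b_n)\to 2$ in probability, i.e. the scheme produces paths of stretch $2+o(1)$ with high probability. I would close with the routine remarks that a single BFS runs in time $O(n)$ on sparse graphs, and that in the small-world models under consideration each recorded path has $O(\mathrm{polylog}(n))$ edges, so per-node storage is $O(\mathrm{polylog}(n))$; for models whose graphs need not be connected one replaces Definition \ref{iddef} by weak idemetricity and restricts to the giant component throughout.

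I do not expect a real obstacle here — the statement is essentially a direct unpacking of the definition of idemetricity. The one point that deserves a sentence of care is precisely the first step above: that fixing the beacon before sampling the query endpoints leaves the marginal distributions of the three relevant node pairs unchanged, which is what licenses invoking the idemetric hypothesis three times and combining the failure probabilities by a union bound. Everything past that is bookkeeping.
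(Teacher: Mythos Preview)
Your proposal is correct and follows exactly the paper's own argument: choose a uniformly random beacon $r_n$, run one (or two, in the directed case) BFS to record shortest paths to and from $r_n$, answer a query $(a,b)$ by concatenation, and invoke idemetricity on the three uniform pairs $(a_n,r_n)$, $(r_n,b_n)$, $(a_n,b_n)$ to conclude $\ell_n/d(a_n,b_n)\to 2$ in probability. The only difference is that you spell out the $\epsilon$-argument and union bound that the paper leaves implicit in its one-line assertion of the limit.
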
 
 Note that if $\mathbb{G}$ is also of small diameter (as is the case for the Watts-Strogatz and PA models, for example), then even in the small proportion of cases where stretch $2 +o(1)$ fails, a short path will still be given by the simple algorithm above. If  $\mathbb{G}$ is weakly idemetric, then similar results hold if we restrict to nodes in the giant component. The significance of Observation \ref{ob} stems from the broad class of networks to which it applies. Greedy routing, for example, can often be very efficient in networks which come with an appropriate geometric embedding, see for example \cite{BK17}. The attraction of Observation \ref{ob}, however, is that it can be applied in scenarios where there is no given geometric embedding (and where it is not even clear how greedy routing would be defined). 

It is also worth observing, that if we restrict attention to graphs of small diameter ($O(\text{polylog}(n))$ say), then one can run a decentralised version of the algorithm above, which may be seen as a distributed version of the Bellman-Ford algorithm, and which stills runs quite efficiently in an online fashion with the nodes computing in parallel. For the sake of simplicity, let's consider graphs of bounded degree. Rather than unilaterally choosing a single beacon, it suffices to have each node choose to be a beacon with probability $\text{log}(n)/n$. Then we can consider a stage by stage process, such that during each stage $s+1$ and for each beacon $r$, each node $u$ compares the shortest path $u_s(r)$ it has previously seen from $u$ to $r$, with each of the paths given by taking the edge to a neighbour $v$  and then following $v_s(r)$ (with information as to which nodes are beacons disseminating simultaneously). If the graph is of diameter $\ell$, then correct values are obtained after at most $\ell$ many of these update stages, or $\ell$ many stages after any changes to the graph. Each stage involves each node passing $O(\text{polylog}(n))$ many bits to each of its neighbours.

 The question of finding short paths in small-world graphs was addressed by Kleinberg in \cite{JK00}.  Since he considered decentralised algorithms in the absence of any preprocessing, however, his results were largely negative -- the central point of that paper was that while short paths may exist, this does not mean that they can be easily found. For a variant of the the Watts-Strogatz model, now referred to as the Kleinberg model, he was able to prove that, while short ($O(\text{log}(n))$) paths will exist between all pairs of nodes with high probability,  no efficient decentralised algorithms exist for finding short paths (in the absence of preprocessing, and for a wide range of parameter inputs). 
 To contrast with Kleinberg's negative results, we prove next that the Kleinberg model is idemetric. So  for precisely the graphs which Kleinberg is able to conclude efficient decentralised algorithms do not exist in the absence of preprocessing, we establish that very efficient decentralised algorithms do exist when reasonable levels of (even decentralised) preprocessing are permitted. We are able to conclude this simply because the network model is idemetric.  First of all, let us define the model.


\noindent \emph{The Kleinberg model}. The model we consider is exactly the same as that in Kleinberg \cite{JK00}. For any square number $n$,  we begin with a set of nodes identified with the lattice points in a $\sqrt{n}\ \times\sqrt{n}$ square, $\{(x,y) : x \in  \{1,2,...,\sqrt{n}\},y \in  \{1,2,...,\sqrt{n}\}\}$.  
The \emph{lattice distance} between two nodes $(x,y)$ and $(r,s)$ is defined to be the number of lattice steps separating them when we fix periodic boundary conditions: $d_{\ell}((x, y), (s, t))$ is the minimum value of $ |s' - x'| + |t' - y'|$, where the minimum is taken over all values $x'=x\pm\sqrt{n}, y'=y\pm\sqrt{n}, s'=s\pm\sqrt{n}, t'=t\pm\sqrt{n}$.   For a universal constant $p \geq  1$, each node $u$  has a directed edge to every other node within lattice distance $p$.  For universal constants $q \geq  0$ and $r \geq  0$, we also construct directed edges from $u$ to $q$ other nodes (the long-range contacts) using independent random trials; the $i$th directed edge from $u$ has endpoint $v$ with probability proportional to $[d_{\ell}(u,v)]^{-r}$ (to obtain a probability distribution, we divide this quantity by the appropriate normalising constant\footnote{For the case $r=0$, we fix the convention that $0^0=1$, so that the long-range outbound contact of a node $u$ could be itself.}). This defines the network model $\mathbb{KL}(r,p,q)$. Note that the lattice distance between two nodes $d_{\ell}(a,b)$ may be quite different than $d(a,b)$.

Since the Kleinberg model gives directed graphs, we cannot apply Theorem \ref{characterise} in order to prove that it is idemetric. We shall give a more direct proof, which is also more informative since it allows us to deduce the precise function $f$ (see Definition \ref{iddef}) with respect to which the condition of being idemetric holds. As well as its relationship to Observation \ref{ob}, the fact that the Kleinberg model is idemetric, is of significant interest in its own right. 

\begin{thm} \label{basic} 
For all $p,q\in \mathbb{N}$ with $p,q\geq 1$, and for all $r\in \mathbb{R}$ with $r\in [0,2)$, the random network model $\mathbb{KL}(r,p,q)$  is idemetric. \end{thm}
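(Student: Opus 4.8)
The plan is to show that $d(u_n,v_n)$ concentrates, i.e.\ $\mathbb{P}(|d(u_n,v_n)-f(n)|>\epsilon f(n))\to 0$ for every $\epsilon>0$, where $f(n)=\log_\lambda n$ for a constant $\lambda=\lambda(r,p,q)>1$ which emerges as the exponential growth rate of balls in the model. Here one must keep the directedness in mind: the short-range edges are symmetric and so can be traversed either way, whereas the $q$ long-range edges out of each node are one-way; accordingly one works both with the forward ball $B_u(k)=\{w:d(u,w)\le k\}$ and with the backward ball $B_u^-(k)=\{w:d(w,u)\le k\}$, the latter grown using incoming long-range edges, whose number at a node is concentrated around $q$ when $r\in[0,2)$ (no single source dominates the normalising constant, which is of order $n^{1-r/2}$).

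For the lower bound $\mathbb{P}(d(u_n,v_n)<(1-\epsilon)f(n))\to 0$ I would use a first-moment argument: $\mathbb{P}(v_n\in B_{u_n}(k))\le\mathbb{E}|B_{u_n}(k)|/n$, so it suffices to show $\mathbb{E}|B_u(k)|\le\lambda^{k(1+o(1))}$, since then with $k=(1-\epsilon)f(n)$ this probability is at most $n^{-\epsilon+o(1)}\to 0$. To bound $\mathbb{E}|B_u(k)|$, decompose each length-$\le k$ path out of $u$ according to its long-range hops: between consecutive long-range hops one moves within a lattice ball of radius $O(\log n)$, which has only $O(\log^2 n)$ nodes. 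This yields a recursion in which $\mathbb{E}|B_u(k+1)|$ exceeds $\mathbb{E}|B_u(k)|$ by (i) a long-range term of order $q\cdot\mathbb{E}|B_u(k)|$ and (ii) a short-range term, coming from the growth of the ``clusters'' (lattice balls seeded at earlier long-range landings) that is also of order $\mathbb{E}|B_u(k)|$ — so $\mathbb{E}|B_u(k)|$ grows geometrically, at a rate $\lambda$ obtained by solving the resulting fixed-point relation. Note $\lambda>q$, but $\lambda$ is strictly smaller than the maximum out-degree ($q$ plus the number of lattice points within distance $p$), which is why the crude volume bound alone does not give a matching lower bound.

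For the upper bound $\mathbb{P}(d(u_n,v_n)>(1+\epsilon)f(n))\to 0$ I would run a two-sided breadth-first search. Revealing $B_{u_n}(k)$ layer by layer, one shows it grows at rate $\lambda(1-o(1))$ as long as it has $o(n)$ nodes: the $q$ long-range edges out of each frontier node land in unexplored territory with probability $1-o(1)$ — the point where $r<2$ enters is that the explored set, though possibly of large lattice diameter, is sparse inside its bounding region, so that even an $r$-biased long-range edge (which lands at lattice distance $\Theta(\sqrt n)$ with constant probability) rarely falls back into it, and simultaneously-revealed long-range endpoints rarely collide — while the short-range edges contribute the remaining cluster-boundary growth. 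Concentration of $|B_{u_n}(k)|$ around its mean follows from a bounded-differences (Azuma) estimate on the edge-exposure filtration, and the same is done for $B_{v_n}^-(k)$. Taking $k=(\tfrac12+\epsilon)f(n)$ makes both balls of size $\ge n^{1/2+\epsilon}$ with high probability, whereupon the expected number of long-range edges from $B_{u_n}(k)$ to $B_{v_n}^-(k)$ is $\gg 1$ (it is $\approx q\,|B_{u_n}(k)|\,|B_{v_n}^-(k)|/n$), and a second-moment computation shows such an edge exists with high probability; hence $d(u_n,v_n)\le 2k+1=(1+2\epsilon+o(1))f(n)$.

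The main obstacle is the ingredient shared by the two bounds: pinning down the geometric growth rate $\lambda$ and proving that it governs both $\mathbb{E}|B_u(k)|$ (for the lower bound) and $|B_u(k)|$ with high probability (for the upper bound), uniformly in $r\in[0,2)$. The difficulty is that the two growth mechanisms are of different natures — a single lattice ball of radius $\rho$ has $\Theta(\rho^2)$ nodes but gains only $\Theta(\rho)$ per step, so short-range edges on their own never produce exponential growth, whereas long-range edges branch; the true rate $\lambda$ comes from a self-consistent accounting of how many clusters are alive at a given step, how large they are, and how fast their frontiers spawn further clusters, and the short-range edges genuinely matter since they push $\lambda$ above $q$. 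The boundary case $r\uparrow 2$, where the long-range edges are weakest and most concentrated near their source, is where the ``stays sparse on the lattice'' estimates are tightest and must be handled carefully; by contrast the martingale concentration and the second-moment edge-existence estimate are routine once $f(n)=\log_\lambda n$ has been identified.
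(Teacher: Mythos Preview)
Your outline matches the paper's approach closely: identify the exponential growth rate $\lambda$ of balls, use a first-moment bound for the lower tail, and a two-sided BFS plus an edge-crossing argument for the upper tail. The paper makes $\lambda$ completely explicit by writing a linear recurrence $C_{i+1}=C_i+\sum_{j\ge1}4j\,C_{i-j}$ (for $p=q=1$) for the size of the \emph{collision-free} branching process dominating $|A_i^\ast|$, reducing it to a fourth-order recurrence, and reading off $\lambda=\alpha$ as the largest root of $x^4-2x^3-4x^2-2x-1$; for general $(p,q)$ the same manipulation gives a $4\times4$ companion matrix whose Perron eigenvalue is $\lambda$. This is exactly the ``self-consistent accounting of clusters'' you allude to, but carried out concretely.

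The one step in your plan that does not go through is the concentration of $|B_u(k)|$ via bounded differences on the edge-exposure filtration. Changing a single long-range edge revealed at layer $j$ can alter $|B_u(k)|$ by order $\lambda^{k-j}$ (all of its descendants), so the sum of squared Lipschitz constants is of order $\sum_j \lambda^j\cdot\lambda^{2(k-j)}\asymp\lambda^{2k}$, comparable to the square of the mean, and Azuma yields only a constant bound. The paper avoids this by coupling the BFS to the collision-free process $C_i^\ast$ (realised on stacked copies of the lattice, ``3-nodes''), declaring a long-range contact \emph{collision-causing} if it lands within lattice distance $\log_2 n$ of a previously enumerated node, and tracking $\mathtt{nd}(C_i)$, the count of 3-nodes not descended from any collision. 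The key input is that for any set $C$ with $|C|\le n^{2/3}$ one has $\mathbb{P}(u'\in C)=o(1)$ uniformly in $r\in[0,2)$; hence at each layer the number of collision-causing contacts is stochastically dominated by $\mathrm{Bin}(\mathtt{nd}(C_i),\beta/2)$, a layer-wise Chernoff bound controls the fraction lost, and a geometric union bound over layers gives $\mathtt{nd}(C_{\ell^\ast})\ge\alpha_0^{\ell^\ast}$ w.h.p.\ for any $\alpha_0<\lambda$. The backward ball needs the same treatment with the additional wrinkle that in-degrees are random; the paper first proves $|\cup_{j\le\ell^\ast}D_j^\ast|<n^{2/3}$ w.h.p.\ (so the collision estimate still applies) and then repeats the layer-wise argument. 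Your phrase ``grows at rate $\lambda(1-o(1))$ layer by layer'' is the right picture; it is specifically the global Azuma shortcut that should be replaced by this per-layer analysis.
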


\begin{proof}

We consider first the case $p=q=1$. Once we have dealt with this case, generalising to arbitrary $p,q\geq 1$ will then be straightforward. The proof for the case $p=q=1$ is split into four sections A-D, and we then consider the general case $p,q\geq 1$ in Section E.

\textbf{(A) The setup}. Let $a$ and $b$ be nodes chosen uniformly at random from amongst the nodes in $G$, a graph with $n$ nodes generated according to $\mathbb{KL}(r,p,q)$. We can consider the sets of nodes $A^{\ast}_1,A^{\ast}_2,\dots$ and $B^{\ast}_1,B^{\ast}_2 \dots$, where:

\[ A^{\ast}_i= \{ u : d(a,u)=i-1\}, \ \ \ \ \ \  \ \ B^{\ast}_i= \{ u : d(u,b)=i-1 \}. \] 
We are interested in finding the least $i$ such that $b\in A^{\ast}_i$, and so are interested in the values $A_i=|A^{\ast}_i|$ (and later will also be interested in $B_i=|B^{\ast}_i|$). Clearly $A^{\ast}_1=\{ a \}$, so $A_1=1$, and then $A^{\ast}_2$ consists of the four lattice neighbours of $a$,  together with the node which is the outbound  long-range contact of $a$. As we continue to consider $A_{i}^{\ast}$ and $A_i$ for larger values of $i$, the process is complicated, however, by potential collisions:  distinct nodes $u$ and $v$ may have outbound long-range contacts which are near to each other. The outbound long-range contact of $a$ \emph{could} be one of its four lattice neighbours, for example, or two of the lattice neighbours of $a$ could have long-range contacts which are within distance one of each other. The latter possibility could then lead to double counting in $A_4$, unless one is careful. To give a useful upper bound for $A_i$, we therefore consider a simplified process in which, roughly speaking, every long-range contact appears on a new two dimensional lattice at an infinite lattice distance from the previous node.  Formally, we can simply consider the sequence $C_i$, where $C_i=0$ for $i<0$ and, for $i\geq 0$: 
\begin{equation}  \label{Cseq} C_0=1, \  \  C_{i+1}= C_i + \sum_{j\geq 1}C_{i-j}\cdot 4j = C_i + 4C_{i-1} +8 C_{i-2} +12 C_{i-3} \cdots \end{equation}    
So, as depicted in Figure \ref{Ci}, $C_1=1, C_2=5, C_3=17, C_4=57$, and so on, and $C_i$ is the value that $A_i$ would take were it not for the possibility of collisions, as described above. 

\emph{The enumeration of} $\cup_i C_i^{\ast}$. It will also be useful to have a counterpart $C_i^{\ast}$ to $A_i^{\ast}$ and $B_i^{\ast}$, so that $C_i^{\ast}$ is a set of nodes with $|C_i^{\ast}|=C_i$. To this end, we consider 3-dimensional points $(x,y,z)$, which we shall call \emph{3-nodes}, and specify that the lattice distance between $(x,y,z)$ and $(x',y',z')$ is infinite when $z\neq z'$, and is equal to the lattice distance between $(x,y)$ and $(x',y')$ otherwise. If $a=(x_a,y_a)$, we let $C^{\ast}_1= \{ (x_a,y_a,0) \}$. In order to enumerate $C^{\ast}_{i+1}$, we take each element $u=(x,y,z)$ of $C^{\ast}_i$ in turn and proceed as follows.

\begin{enumerate} 
\item  Enumerate into $C^{\ast}_{i+1}$ all 3-node lattice neighbours of $u$ (i.e.\ those 3-nodes at lattice distance 1) which have not already been enumerated into $\cup_{j\leq i+1}C^{\ast}_j$. 
\item Let $v=(x',y')$ be the outbound long-range contact of $(x,y)$. Let $z'$ be such that no nodes with third coordinate $z'$ have been enumerated into $\cup_{j\leq i+1} C^{\ast}_j$ before, and enumerate $(x',y',z')$  into $C^{\ast}_{i+1}$ as the outbound long-range contact of $u$. 
\end{enumerate} 

We say that the 3-node $u=(x,y,z)\in C^{\ast}_i$  \emph{corresponds} to the node $(x,y)$. Note that, due to collisions, it may be the case that a node $(x,y)$ has several 3-nodes in $C^{\ast}_i$ corresponding to it. The process above specifies an enumeration of $\cup_i C_i^{\ast}$, and we also consider it to specify an enumeration of $\cup_i A^{\ast}_i$ in the obvious way: nodes in this set are enumerated in the order of their first corresponding 3-nodes in $\cup_i C_i^{\ast}$. Distances between 3-nodes are defined in the standard way, in terms of the number of edges on the shortest directed path between two nodes: there are directed edges from each 3-node $u$ to each of its lattice neighbours, i.e.\ those $v$ such that $d_{\ell}(u,v)=1$, and a directed edge from $u$ to its outbound long-range contact.  For 3-nodes $u$ and $v$, we say that $v$ is a descendant of $u$ if  $u$ is a 3 node in $C^{\ast}_i$, $v$ is a 3-node in $C^{\ast}_j$ for some $j>i$ and $d(u,v)=j-i$.  We will normally be interested in the descendants of $u$, only in the case that $u$ is a long-range outbound contact.

\begin{center} 
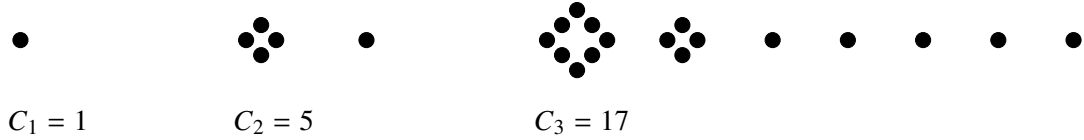
\begin{figure} 
\begin{center} 
\begin{tikzpicture}[xscale=1,yscale=1]
\draw [fill] (0,0) circle [radius=0.1];

\node[below right] at (-0.3,-0.8) {$C_1=1$}; 
\node[below right] at (2.7,-0.8) {$C_2=5$}; 
\node[below right] at (6.7,-0.8) {$C_3=17$}; 

\draw [fill] (3,0) circle [radius=0.1];
\draw [fill] (3.4,0) circle [radius=0.1];
\draw [fill] (3.2,0.2) circle [radius=0.1];
\draw [fill] (3.2,-0.2) circle [radius=0.1];

\draw [fill] (4.6,0) circle [radius=0.1];

\draw [fill] (7,0) circle [radius=0.1];
\draw [fill] (7.2,0.2) circle [radius=0.1];
\draw [fill] (7.2,-0.2) circle [radius=0.1];
\draw [fill] (7.4,0.4) circle [radius=0.1];
\draw [fill] (7.4,-0.4) circle [radius=0.1];
\draw [fill] (7.6,0.2) circle [radius=0.1];
\draw [fill] (7.6,-0.2) circle [radius=0.1];
\draw [fill] (7.8,0) circle [radius=0.1];

\draw [fill] (8.6,0) circle [radius=0.1];
\draw [fill] (9,0) circle [radius=0.1];
\draw [fill] (8.8,0.2) circle [radius=0.1];
\draw [fill] (8.8,-0.2) circle [radius=0.1];

\draw [fill] (10,0) circle [radius=0.1];

\draw [fill] (11,0) circle [radius=0.1];

\draw [fill] (12,0) circle [radius=0.1];
\draw [fill] (13,0) circle [radius=0.1];
\draw [fill] (14,0) circle [radius=0.1];

\end{tikzpicture}
\end{center} 
\caption{A schematic depiction of the growth of the sequence $\{ C_i \}_{i\in \mathbb{N}}.$ }
\label{Ci} 
\end{figure}
\end{center} 

It follows immediately from the definitions that for all $i$, $A_i\leq C_i$.  Since $C_i=C_{i-1}+\sum_{j\geq 1}C_{i-j-1}\cdot 4j$ for $i>0$, equation (\ref{Cseq}) can then be rewritten for $i>0$: 
\[  C_{i+1}= 2C_i + 3C_{i-1} + 4\sum_{j\geq 2}C_{i-j}. \] 
Given that, for $i>2$,  $C_{i-1}=2C_{i-2}+3C_{i-3}+4\sum_{j\geq 2} C_{i-j-2}$, this in turn then gives, for $i>2$: 
\[  \label{Cseq3}  C_{i+1}= 2C_i + 4C_{i-1} + 2C_{i-2}+C_{i-3}.  \] 
Standard techniques can then be applied in order to solve this linear recurrence relation. Since the characteristic polynomial 
\begin{equation} \label{char}  x^4-2x^3-4x^2-2x-1 \end{equation}  
has a largest root 
\begin{equation} \label{alpha} \alpha \approx 3.38298, \end{equation}  it follows that for some constant $\rho>0$: 
\begin{equation} \label{limC}  \text{lim}_{i\rightarrow \infty} \frac{C_i}{\rho\cdot \alpha^i} = 1. \end{equation} 
Let us define:
\[  \ell_n= \text{log}_{\alpha} n. \]
It is immediate that  $C_{i+1}\geq \sum_{j\leq i}C_j$. To within an additive constant, we then have that $\ell_n$ is the least $k$ such that $\sum_{i=1}^k C_i\geq n$. Note that, for $i>0$, $C_{i+1}\geq 2C_i$. For any $\epsilon>0$, it follows that: 
\[ 
\text{lim}_{n\rightarrow \infty} \frac{\sum_{i=1}^{(1-\epsilon)\ell_n}C_{i}}{n}=  0. 
\] 
Since $\sum_{i=1}^{(1-\epsilon)\ell_n}A_{i} \leq \sum_{i=1}^{(1-\epsilon)\ell_n}C_{i}$, we conclude that, for any $\epsilon>0$, with high probability: 
\[  d(a,b)> (1-\epsilon) \ell_n. \] 
To prove the theorem, it thus suffices to show that, for any $\epsilon>0$,  the following holds with high probability: 
\begin{equation}  \label{upbound} d(a,b)< (1+\epsilon) \ell_n. \end{equation} 

\textbf{(B) Proving (\ref{upbound})}. 
First of all we need to establish a useful bound for the distribution on long-range contacts. Suppose that we are given arbitrary nodes $u$ and $v$, but that we do not know the value of $u'$ which is the outbound long-range contact of $u$.  Let  $\text{ln}(n)$ denote the natural logarithm. Then we shall show that, irrespective of the given relative positions of $u$ and $v$, the fact that  $r\in [0,2)$ implies: 

\begin{equation} \label{lowpb} 
\mathbb{P}(u'=v)\geq   \frac{1}{4n\text{ln}(n)}.  
\end{equation} 
The proof of (\ref{lowpb}) is given in the Appendix.  Towards proving (\ref{upbound}), fix $\epsilon$ with $0<\epsilon<1/6$,  and for the remainder of the proof let $\ell^{\ast}= \frac{(1+\epsilon)\ell_n}{2}$ (note that we drop the subscript $n$ for convenience). The basic idea, as depicted in Figure \ref{lsets}, is that so long as $A^{
\ast}_{\ell^{\ast}}$ and $B^{\ast}_{\ell^{\ast}}$ are reasonably large, it is very likely the case that one of the nodes in $A^{\ast}_{\ell^{\ast}}$  has some node in $B^{\ast}_{\ell^{\ast}}$ as an outbound long-range contact. This gives a short path from $a$ to $b$. More precisely,  what we do is to show that for any constant $\mu>1$, the following both hold with high probability: 
\begin{equation} \label{lowerb} 
A_{\ell^{\ast}}>\mu  n^{1/2}\text{ln}(n)\ \ \ \ \ \ \ \text{            and             } \ \ \ \ 
 \ \ \ B_{\ell^{\ast}}>\mu n^{1/2}\text{ln}(n).
\end{equation} 
If this holds for $\mu>4$ then either $\bigcup_{i=1}^{\ell^{\ast}} A^{\ast}_i$ and $\bigcup_{i=1}^{\ell^{\ast}} B^{\ast}_i$ already have non-empty intersection, which gives the required short path from $a$ to $b$, or else we can apply (\ref{lowpb}) to conclude that the probability every member of $A^{\ast}_{\ell^{\ast}}$ will fail to have a member of $B^{\ast}_{\ell^{\ast}}$ as a long-range contact is bounded above by: 
\[ \left( 1- \frac{4 n^{1/2}\text{ln}(n)}{4n\text{ln}(n)}\right)^{\mu n^{1/2}}=  \left( 1- \frac{1}{n^{1/2}}\right)^{\mu n^{1/2}} \rightarrow e^{-\mu} \text{  as  }n\rightarrow \infty. \] 
Given that $\mu$ was arbitrary, (\ref{lowerb}) therefore suffices to give (\ref{upbound}). It remains to establish (\ref{lowerb}).

\begin{center} 
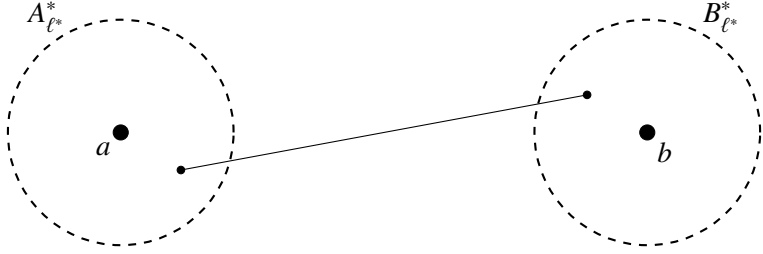
\begin{figure} 
\begin{center} 
\begin{tikzpicture}[xscale=1,yscale=1]
\draw [fill] (2,4) circle [radius=0.1];

\draw [fill] (2.8,3.5) circle [radius=0.05];

\draw [fill] (9,4) circle [radius=0.1];
\draw [fill] (8.2,4.5) circle [radius=0.05];

\draw (2.8,3.5) -- (8.2,4.5); 

\node[below left] at (2,4) {$a$}; 
\node[below right] at (9,4) {$b$}; 

\node[above left] at (1.4,5.2) {$A^{\ast}_{\ell^{\ast}}$}; 
\node[above right] at (9.6,5.2) {$B^{\ast}_{\ell^{\ast}}$};



\draw[thick,dashed] (2,4) circle (1.5cm);
\draw[thick,dashed] (9,4) circle (1.5cm);

\end{tikzpicture}
\end{center} 
\caption{If $A_{\ell^{\ast}}$ and $B_{\ell^{\ast}}$ are much larger than $n^{1/2}\text{ln}(n)$ there is likely to be an edge between them, giving a short path from $a$ to $b$. }
\label{lsets} 
\end{figure}
\end{center}

\textbf{(C) Proving (\ref{lowerb}) for $A_{\ell^{\ast}}$}. 
 We deal first with $A_{\ell^{\ast}}$ -- achieving the lower bound for $B_{\ell^{\ast}}$ then uses most of the same ideas but is complicated by the fact that nodes do not have a fixed number of inbound long-range contacts. We deal with the case for $B_{\ell^{\ast}}$ in Section D of the proof. To achieve the lower bound for  $A_{\ell^{\ast}}$, we make some new definitions. Given the enumeration of $\cup_i C_i^{\ast}$ specified previously, we say $u\in C^{\ast}_{i+1}$ is \emph{collision causing} if both:
 \begin{enumerate}
 \item[(a)] It is a long-range outbound contact of an element of $C^{\ast}_i$, and corresponds to a node in $\bigcup_{j=1}^{i+1}A^{\ast}_{j}$ which is within lattice distance $\text{log}_2(n)$ of another previously enumerated element of $\bigcup_{j=1}^{i+1}A^{\ast}_{j}$, and;
 \item[(b)] It is not the descendant of a 3-node which is already collision causing.  
 \end{enumerate} 
 Since we follow the process for less than  $\text{log}_2(n)/2$ many steps (i.e.\ $\ell^{\ast}<\text{log}_2(n)/2$), long-range outbound contacts at a lattice distance $>\text{log}_2(n)$ from all other nodes will not lead to collisions of the kind discussed previously. We then define the \emph{discounted} 3-nodes, to be all those 3-nodes in $\bigcup_{i=1}^{\ell^{\ast}} C_i^{\ast}$ which are either collision causing, or else are descendants of a collision causing 3-node. The key point of these definitions is that any two 3-nodes in  $\bigcup_{i=1}^{\ell^{\ast}} C_i^{\ast}$ which are not discounted correspond to distinct nodes in $\bigcup_{i=1}^{\ell^{\ast}} A_i^{\ast}$. 
 For each $i$, let $\mathtt{nd}(C^{\ast}_i)$ be all those 3-nodes in $C^{\ast}_i$ which are not discounted, and define $\mathtt{nd}(C_i)=|\mathtt{nd}(C^{\ast}_i)|$. 
The basic idea is now that we want to show, for all sufficiently large $n$ and for all $i\leq \ell^{\ast}$, that $\mathtt{nd}(C_i)$ is unlikely to be very much smaller than $C_i$.  Since distinct nodes in $\mathtt{nd}(C^{\ast}_i)$ correspond to distinct nodes in $A^{\ast}_i$, and since it follows from the definition of $\ell^{\ast}$ that $C_{\ell^{\ast}}=\Omega(n^{(1+\epsilon)/2})$, this will suffice to give the probabilistic lower bound for  $A_{\ell^{\ast}}$ in (\ref{lowerb}).

In order to give the probabilistic lower bound for $\mathtt{nd}(C_i)$ just discussed, we will need to bound the probability that a given long-range contact is collision causing. So suppose that we are given an arbitrary node $u$ and an arbitrary set of nodes $C$. While we know which node $u$ is, and we know the elements of $C$, suppose that we do not know the value of  $u'$, which is the outbound long-range contact of $u$.   We want to show that if  $|C|\leq n^{2/3}$, then: 
\begin{equation} \label{cpbound} 
\mathbb{P}(u'\in C) =o(1). 
\end{equation}

The proof of (\ref{cpbound}) is given in the appendix. Using  this bound, we can now provide the probabilistic lower bound for $\mathtt{nd}(C_i)$ discussed previously. Let $\alpha$ be defined as before, in (\ref{alpha}). What we want to show is that, for every $\alpha_0<\alpha$, the following holds with high probability: 
\begin{equation} \label{nond}
\mathtt{nd}(C_{\ell^{\ast}})>\alpha_0^{\ell^{\ast}}.
\end{equation} 
Given (\ref{nond}), let $\alpha_0<\alpha$ be such that $\text{log}_{\alpha}\alpha_0 > (1+\epsilon/2)/(1+\epsilon)$. Then with high probability: 
\[ A_{\ell^{\ast}}\geq \mathtt{nd}(C_{\ell^{\ast}})>
\alpha_0^{\ell^{\ast}}=(\alpha^{\text{log}_{\alpha}\alpha_0})^{(1+\epsilon)\ell_n/2}>\alpha^{(1+\epsilon/2)\ell_n/2} = n^{\frac{1}{2}+\frac{\epsilon}{4}}. \] 
For any constant $\mu>0$, the last term $n^{\frac{1}{2}+\frac{\epsilon}{4}}$ is greater than $\mu n^{1/2}\text{ln}(n)$ for all sufficiently large $n$. So this gives (\ref{lowerb}) for $A_{\ell^{\ast}}$, as required. 

To prove  (\ref{lowerb}) for $A_{\ell^{\ast}}$, it therefore remains to use (\ref{cpbound}) in order to establish (\ref{nond}).  To this end, consider the version of the recurrence relation (\ref{Cseq}) which results when, in each generation, the nodes in $C_i^{\ast}$ only have $(1-\beta)C_i$ many long-range outbound contacts (for $\beta < 1$). We'll use $E_i$ to denote the new resulting sequence of values: 
\begin{equation}  \label{modCrec} E_0=1/(1-\beta), \ \ \ \ E_{i+1}=(1-\beta)E_i + \sum_{j\geq 1} E_{i-j}\cdot 4j(1-\beta). \end{equation} 

Of course $(1-\beta)E_i$ may not be integer valued, but the sequence  given by this recurrence relation is meaningful in giving a lower bound to the number of non-discounted 3-nodes there will be in each generation, in a context where we always have some integer number  $\geq (1-\beta)\mathtt{nd}(C_i)$ of non-collision causing long-range outbound contacts for elements of $\mathtt{nd}(C^{\ast}_i)$.   The same algebraic manipulations as before can be applied, in order to reduce (\ref{modCrec}) to: 
\[ E_{i+1}=(2-\beta)E_i+(4-3\beta)E_{i-1} + (2-3\beta)E_{i-2} + (1-\beta)E_{i-3}. \] 
This is gives a characteristic equation which is a function of $\beta$: 
\begin{equation} \label{deff}  f(x,\beta)= x^4-(2-\beta)x^3 -(4-3\beta)x^2-(2-3\beta)x-(1-\beta). \end{equation}
Now in a neighbourhood of $x=\alpha$, $\beta=0$, the derivates of $f$ with respect to $x$ and $\beta$ are both positive, meaning that the largest root of $f$ is a continuous decreasing function of $\beta$ in some neighbourhood of this point.  For each $\alpha_0<\alpha$ sufficiently close to $\alpha$, it follows that there exists $\beta>0$ for which $\alpha_0$ is the largest root of $f(x,\beta)$.  Similarly, for each $\alpha_0>\alpha$ sufficiently close to $\alpha$,  there exists $\beta<0$ for which $\alpha_0$ is the largest root of $f(x,\beta)$.

So suppose given $\alpha_0<\alpha$ and $\epsilon'$ with $0<\epsilon'<0.5$. To prove (\ref{nond})  it suffices to show, for  all sufficiently large $n$, that $\mathtt{nd}(C_{\ell^{\ast}}) >\alpha_0^{\ell^{\ast}}$ with probability $>1-\epsilon'$ . Choose $\alpha_1$ and $\beta$, with $\alpha_0<\alpha_1<\alpha$, $0<\beta<0.5$, and such that $\alpha_1$ is the largest root of $f(x,\beta)$. Suppose we are given $\mathtt{nd}(C^{\ast}_i)$ for some $i<\ell^{\ast}$. For all sufficiently large $n$, the fact that  we chose $\epsilon< 1/6$ in the definition of $\ell^{\ast}$ means that: 
\begin{equation} \label{Csmall} 
\left|\cup_{j\leq \ell^{\ast}} C^{\ast}_{j}\right|<n^{2/3}. 
\end{equation} 
Then, according to (\ref{cpbound}), for sufficiently large $n$ the number of outbound long-range contacts of elements of $\mathtt{nd}(C^{\ast}_i)$ which are collision causing, is stochastically dominated by: 
\[ X \sim \text{Bin}(\mathtt{nd}(C_i),\beta/2). \]  
Applying the Chernoff bound to this stochastically dominating binomial, we conclude that for some $I_{\beta}>0$, and for all sufficiently large $n$, the probability that more than a $\beta$ proportion of the outbound long-range contacts of elements of $\mathtt{nd}(C^{\ast}_i)$ are collision causing is bounded above by: 
\[ e^{-I_{\beta}\mathtt{nd}(C_i)}. \] 

Now $\mathtt{nd}(C_i)$ is guaranteed to grow at a certain rate, since $\mathtt{nd}(C_i)\geq 4(i-1)$ for $i\geq 2$. We can therefore choose $i_0$ such that $e^{-I_{\beta}4(i_0-1)}<\epsilon'/2$, and this means that with probability 1, 
\[ e^{-I_{\beta}\mathtt{nd}(C_{i_0})}<\epsilon'/2.\] 
 We chose $\beta<0.5$. So if at most a $\beta$ proportion of the long-range contacts of elements of $\mathtt{nd}(C_{i_0}^{\ast})$ are collision causing, we have $\mathtt{nd}(C_{i_0+1})\geq 1.5 \mathtt{nd}(C_{i_0})$. Since $\epsilon'<0.5$ this gives: 
 \[ e^{-I_{\beta}\mathtt{nd}(C_{i_0+1})}<(\epsilon'/2)^{1.5}=\epsilon'\sqrt{2\epsilon'}/4<\epsilon'/4.\]  
Then so long as at most a $\beta$ proportion of the long-range contacts of elements of $\mathtt{nd}(C_{i_0+1}^{\ast})$ are collision causing, we have $ e^{-I_{\beta}\mathtt{nd}(C_{i_0+2})}<\epsilon'/8$, and so on. Define $\Pi_{\beta,\epsilon'}$ to be the following event: 
for all $i$ with $i_0\leq i <\ell^{\ast}$, at most a $\beta$ proportion of the long-range contacts of elements of $\mathtt{nd}(C_{i}^{\ast})$ are collision causing.  
The analysis above then gives, for sufficiently large $n$:
 \[ \mathbb{P}(\Pi_{\beta,\epsilon'})>1-\epsilon'.\]  
 To complete the argument, consider all those 3-nodes at a lattice distance of $i_0-1$ from $a$. These are the 3-nodes we considered above, which are guaranteed to belong to $\mathtt{nd}(C^{\ast}_{i_0})$. So long as $\Pi_{\beta,\epsilon'}$ holds, we can give a lower bound for the number of descendants of these nodes which belong to $\mathtt{nd}(C^{\ast}_{i_0+i-1})$ for each $i>0$ through the series: 
 \begin{equation} \label{Eseries}  E_{0}=4(i_0-1)/(1-\beta) \ \ \ \ E_{i+1}=(1-\beta)E_i + \sum_{j\geq 1} E_{i-j}\cdot 4j(1-\beta).\end{equation} 
 There exists a constant $\rho>0$ such that $\text{lim}_{i\rightarrow \infty}  E_i/(\rho \alpha_1^i)=1 $. Since $\alpha_0<\alpha_1$, we can choose $i_1>i_0$ such that $\mathtt{nd}(C_{\ell^{\ast}}^{\ast})>\alpha_0^{\ell^{\ast}}$ whenever $\Pi_{\beta,\epsilon'}$ holds and $\ell^{\ast}>i_1$.  This gives (\ref{nond}) as required.
 
 \textbf{(D) Proving (\ref{lowerb}) for $B_{\ell^{\ast}}$}.  As remarked on previously, establishing (\ref{lowerb}) for $B_{\ell^{\ast}}$ is complicated by the fact that nodes do not have a fixed number of inbound long-range contacts. The proof is similar to the case for $A_{\ell^{\ast}}$, however, and so Section D of the proof is given in the Appendix. 

 \textbf{(E) The case $p,q\geq 1$}.  The generalised form of the recurrence relation (\ref{Cseq}) is given as follows. For $i<0$, $C_i=0$. Then $C_0=1/q$ and, for $i\geq 0$: 
\[ C_{i+1}= qC_i + \sum_{j\geq 1} \left( 4p^2(i-1/2) +2p \right) qC_{i-j}. \]  
Using this expression for $C_i$, we get that, for $i>0$: 
\[ C_{i+1}=(q+1)C_i + (2p^2+2p-1)qC_{i-1} +\sum_{j\geq 2} 4p^2qC_{i-j}. \] 
For $i>2$, this means that $C_{i-1}=(q+1)C_{i-2} + (2p^2+2p-1)qC_{i-3} +\sum_{j\geq 2} 4p^2qC_{i-j-2}$. For $i>2$ we then have:  
\[ C_{i+1}=(q+1)C_i+ \left( (2p^2+2p-1)q +1 \right) C_{i-1} + (4p^2q-q-1)C_{i-2} +(2p^2-2p+1)qC_{i-3}. \] 
 The largest root $\alpha$ of the corresponding characteristic polynomial is the same as the largest eigenvalue of the matrix: 
 \begin{equation} \label{defM} 
\boldsymbol{M}=
  \begin{bmatrix}
    q+1 & (2p^2+2p-1)q+1 & 4p^2q-q-1 & (2p^2-2p+1)q  \\
    1 & 0 & 0 & 0 \\
     0 & 1 & 0 & 0 \\
      0 & 0 & 1 & 0     
  \end{bmatrix}
\end{equation} 
That the largest eigenvalue  $\alpha$ of $\boldsymbol{M}$ is positive and real (and occurs with multiplicity 1) follows directly from the Perron-Frobenius Theorem for non-negative matrices.  In a manner precisely analogous to the proof for $p=q=1$, we conclude that, for some constant $\rho>0$:
\[  \text{lim}_{i\rightarrow \infty} \frac{C_i}{\rho\cdot \alpha^i} = 1. \] 
The remainder of the proof then goes through with only the obvious required modifications.  The definitions of $C_i^{\ast}$ and $D_i^{\ast}$ (where $D_i^{\ast}$ is defined in Section D of the proof, in the Appendix) have to be adjusted to incorporate the increased number of neighbours, for example, and the expression $\text{log}_2(n)$ must be replaced throughout with $p\text{log}_2(n)$. The expected number of inbound long-range contacts for each node is now $q$, and distributions on the number of inbound long-range contacts must be adjusted accordingly. 
The general form of (\ref{deff}) is: 
\[ x^4-(q(1-\beta)+1)x^3 -  \left( (2p^2+2p-1)q(1-\beta) +1 \right) x^2 -(4p^2q(1-\beta)-q(1-\beta)-1)x -(2p^2-2p+1)q(1-\beta). \] 
Again we have that if $\alpha$ is the largest root then in a neighbourhood of $x=\alpha$, $\beta=0$, the derivates of $f$ with respect to $x$ and $\beta$ are both positive. 
\end{proof}

So the proof establishes that the Kleinberg model is idemetric with respect to $f=\text{log}_{\alpha} n$, where $\alpha$ is the largest eigenvalue of the matrix $\boldsymbol{M}$ given in  (\ref{defM}). 

\subsection{Memory efficient routing with small stretch} 

In this subsection we consider deterministic distributed routing algorithms, but focus instead on the total routing information required in order to achieve paths of small stretch for almost all pairs. As before we shall consider methods which give short paths even when stretch $2+o(1)$ fails, so long as the graph is of small diameter (as is the case for most small-world models which give connected graphs with high probability). The standard set-up we consider is as in \cite{GG01}, and we refer the reader to that paper for the precise details of the framework. The nodes in a graph are given labels in $\{ 1,\dots, n \}$ and each edge adjacent to a node $u$ is given an \emph{output port} number  in $\{ 1,\dots,\text{deg}(u) \}$ with respect to $u$, where $\text{deg}(u)$ is the degree of $u$.  Our results here are strong in the sense that lower bounds on routing information allow arbitrary relabelling of nodes and ports, while upper bounds assume one is given arbitrary and fixed labelings of nodes and ports.   A routing function $R$ is a pair $(P, H)$, where $P$ is the \emph{port} function and $H$ is the \emph{header} function. For any two distinct nodes $u$ and $v$, $R$ produces a path $u=u_0, \dots, u_k=v$ of nodes, along with a sequence $h_0,\dots,h_k$ of headers and a sequence $p_0,\dots, p_k$ of ports. We stipulate that $h_0=v$ and $p_0=0$.  In general, a message with header $h_i$, arriving at node $u_i$ through input port $p_i$, is forwarded to the output port $P(u_i, p_i ,h_i)= p_{i+1}$ with a new header $H(u_i, p_i, h_i)=h_{i+1}$.  Routing functions are defined as a set of local routing functions $R_u=(P_u , H_u)$, one for each node $u$.  
The memory requirement of node $u$ is the length of the smallest program that computes $R_u$. The total memory requirement for a routing function is the sum of the memory requirements of the nodes. 

The first observation to be made here, is that we can employ a similar trick to that described in Section \ref{sp1} for idemetric network models. To deal with $G_n$ we pick a beacon $r_n$ uniformly at random and perform a breadth-first search from $r_n$, defining a spanning tree $T$. At each node $u$ other than $r_n$ we simply store a port number $p(u)$ which takes them the first step on a shortest path to $r_n$. At $r_n$, meanwhile, we store $O(n \text{log} n)$ bits of information: for each other node $u$, $r_n$ stores the predecessor $v$ in $T$ together with $v$'s port number for the edge to $u$.  
For $(u,v)$ as input, $u$ begins by sending a header of the form $(v,0)$ to the node at its port $p(u)$. The last coordinate 0 indicates that we are in the phase of the routing process where a path is being followed to the beacon. The next node $v$  therefore passes on the same header, and so on, until the beacon is reached. Upon reaching $r_n$, the beacon then  finds  the sequence of port numbers $q_0,\dots, q_d$ given by $T$ and leading to $v$ if followed in turn, before passing on the header $(q_1,\dots,q_d,1)$ to the node $w$ at its output port $q_0$. Since the last coordinate 1 indicates that we are now moving away from the beacon, $w$ then passes the header $(q_2,\dots,q_d,1)$ to the next node, and so on, until  the header is the singleton $(1)$ and $v$ is reached.  This argument gives the following result, where we say that a statement holds for almost all pairs if it holds with high probability for $r_n,u,v$ chosen uniformly at random. 

\begin{thm}For idemetric networks total memory requirement $O(n\text{log}(n))$ suffices to achieve stretch $2+o(1)$  for almost all pairs. 
\end{thm}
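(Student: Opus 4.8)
The plan is to formalise the beacon-and-spanning-tree scheme sketched immediately before the statement and then to bound its stretch using idemetricity. First I would fix a beacon $r_n$ chosen uniformly at random, run a breadth-first search from $r_n$ (and, if the graph is directed, a second BFS with all edge directions reversed) to obtain a shortest-path tree $T$, and define the local routing functions as follows. At each node $u\neq r_n$ the program $R_u$ stores only the port $p(u)$ leading to the next node on a shortest path to $r_n$, together with the constant-size logic ``while the last header coordinate is $0$, forward the header unchanged through port $p(u)$''. At $r_n$ the program stores, for every node $w$, the predecessor of $w$ in $T$ together with the port of that predecessor for the edge to $w$; from this it can reconstruct, for any target $v$, the list of ports $q_0,\dots,q_d$ along the tree path from $r_n$ to $v$. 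On input $(u,v)$ the message travels up $T$ to $r_n$ carrying header $(v,0)$, whereupon $r_n$ emits header $(q_1,\dots,q_d,1)$ through port $q_0$, and the message descends $T$ to $v$, each node stripping one port from the header, until the header is $(1)$ and $v$ is reached.

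The memory bound is then immediate. Each program $R_u$ with $u\neq r_n$ has size $O(\text{log}\,n)$ (a single port number plus constant overhead), while $R_{r_n}$ has size $O(n\text{log}\,n)$ since it encodes $n$ many (predecessor, port) pairs of $O(\text{log}\,n)$ bits each; summing over all nodes gives total memory $O(n\text{log}\,n)$. The one point to be explicit about here is that the headers transmitted along the route are messages, not part of any node's stored program, so their length does not contribute to the memory requirement.

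For the stretch, note that the path produced for input $(u,v)$ has length exactly $d(u,r_n)+d(r_n,v)$, because $T$ is a shortest-path tree (and, in the undirected case, $d(u,r_n)=d(r_n,u)$). Now invoke idemetricity: by Definition \ref{iddef} there is a finite-valued $f$ such that $d(x_n,y_n)/f(n)$ tends to $1$ in probability for uniformly random $x_n,y_n$. Since $r_n$, $u$ and $v$ are each chosen uniformly at random, each of the pairs $(u,r_n)$, $(r_n,v)$ and $(u,v)$ is a uniformly random pair, so for every fixed $\epsilon>0$ a union bound shows that, with high probability over $r_n,u,v$, all three of $d(u,r_n)$, $d(r_n,v)$, $d(u,v)$ lie in the interval $[(1-\epsilon)f(n),(1+\epsilon)f(n)]$ (in particular all three are finite, so the route is well defined). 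On this event the returned path has stretch at most $\frac{2(1+\epsilon)f(n)}{(1-\epsilon)f(n)}=\frac{2(1+\epsilon)}{1-\epsilon}=2+O(\epsilon)$. Since this holds for every fixed $\epsilon>0$ with high probability, the stretch of the returned path tends to $2$ in probability, i.e.\ it is $2+o(1)$ for almost all pairs, which is the claim.

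The proof is short, and the only delicate points are bookkeeping ones: checking that the beacon can in fact reach, and be reached from, almost all nodes (which holds precisely because idemetricity forces $d(u_n,r_n)$ and $d(r_n,v_n)$ to be finite with high probability), and being careful to charge header construction and transmission to the message rather than to node memory. If one additionally knows the diameter is $O(\text{polylog}(n))$, as for most small-world models producing connected graphs with high probability, then even on the exceptional pairs the scheme returns a path of length $O(\text{polylog}(n))$, so no pair is served catastrophically badly; this remark is not needed for the statement itself but explains why the scheme is attractive in practice.
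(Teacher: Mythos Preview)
Your proof is correct and follows essentially the same approach as the paper: the beacon-and-spanning-tree routing scheme, the $O(\log n)$ per-node plus $O(n\log n)$ at the beacon memory accounting, and the stretch analysis via idemetricity are all identical to the argument given in the text immediately preceding the theorem. Your write-up is in fact somewhat more explicit about the stretch calculation (the $2(1+\epsilon)/(1-\epsilon)$ bound and the union bound over the three pairs) and about why headers do not count toward stored memory, but these are elaborations rather than departures.
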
 

The next result shows that, in fact, the problem is provably easier in the idemetric case than in the general case. 

\begin{thm}[Essentially Gavoille and Gengler, \cite{GG01}] \label{LB} Total memory requirement $\Omega(n^2)$  is required for general networks, to achieve stretch $< 3$ routing for almost all pairs.
\end{thm}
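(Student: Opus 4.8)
The plan is to adapt the classical $\Omega(n^2)$ lower bound of Gavoille and Gengler \cite{GG01}; the one genuinely new point is to make that argument robust to a routing scheme that is permitted to fail --- i.e.\ to output a path of stretch $\ge 3$ --- on an $o(1)$-fraction of the $\binom{n}{2}$ pairs. First I would fix a hard family $\mathcal{F}$. Let $k=\lfloor n/2\rfloor$, take disjoint vertex sets $A=\{a_1,\dots,a_k\}$ and $B=\{b_1,\dots,b_k\}$, fix once and for all a Hamilton path $P$ on $A\cup B$ that alternates between the two sides, and let $\mathcal{F}$ be the collection of all bipartite graphs on $A\sqcup B$ that contain $P$. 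Then $|\mathcal{F}|=2^{k^2-O(k)}=2^{\Omega(n^2)}$, every $G\in\mathcal{F}$ is connected and bipartite, and there are $k^2=\Theta(n^2)$ \emph{critical pairs} $(a_i,b_j)\in A\times B$.

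The structural observation I would use is that in a bipartite graph two adjacent vertices $a_i\in A$ and $b_j\in B$ have no common neighbour (the neighbours of $a_i$ lie in $B$, those of $b_j$ lie in $A$), so the edge $a_ib_j$ is the unique $a_i$--$b_j$ path of length $<3$: every other $a_i$--$b_j$ path has length $\ge 3$. Hence any routing scheme that routes $(a_i,b_j)$ with stretch $<3$ must forward the message along the edge $a_ib_j$. Therefore, from the routing tables of a graph $G\in\mathcal{F}$ alone one can read off, for each ordered critical pair, a bit $\beta_{ij}(G)$, defined to be $1$ exactly when the path the scheme produces from $a_i$ to $b_j$ traverses an edge between these two nodes; and $\beta_{ij}(G)=1$ if and only if $a_ib_j$ is an edge of $G$, \emph{provided} the scheme is good on that pair. (When $a_ib_j$ is not an edge of $G$ there is no such edge, so $\beta_{ij}(G)=0$ automatically; these pairs, being at distance $\ge 3$, impose no constraint.)

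The remainder is the standard counting argument, now made failure-tolerant. Suppose some routing scheme achieves stretch $<3$ for almost all pairs on $\mathcal{F}$; then on each $G\in\mathcal{F}$ it errs on at most $o(n^2)$ pairs, hence on at most $o(n^2)$ critical pairs. If two graphs $G,G'\in\mathcal{F}$ received identical collections of routing tables, they would produce the same path on every (labelled) pair, hence the same bits $\beta_{ij}$; comparing these against the true adjacencies on the $\ge k^2-o(n^2)$ critical pairs that are good for both $G$ and $G'$ forces $G$ and $G'$ to differ in at most $o(n^2)$ of their $k^2$ variable adjacencies. Consequently every fibre of the map $G\mapsto(\text{routing tables of }G)$ has size at most $2^{o(n^2)}$ --- flipping a vanishing fraction of $k^2$ bits can be done in only $2^{o(k^2)}$ ways --- so the image has size $2^{k^2-o(n^2)}=2^{\Omega(n^2)}$; since there are fewer than $2^{L+1}$ bit-strings of length $\le L$, some $G\in\mathcal{F}$ must require total routing memory $L=\Omega(n^2)$. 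As in \cite{GG01}, the bound is unaffected by allowing arbitrary relabellings of nodes and ports, which the same count accommodates.

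None of the routine algorithmic ingredients of the paper enter here. The two steps that need care are: (i) checking that the ``only the edge itself has length $<3$'' property, which is what turns stretch $<3$ into ``must use this specific edge'', really does hold in every member of $\mathcal{F}$ --- this is why the family is built from bipartite graphs rather than, say, a clique minus a matching (there adjacent vertices share many neighbours and stretch $<3$ forces nothing); and (ii) verifying that the $o(n^2)$ of slack introduced by the ``almost all pairs'' relaxation is absorbed by the entropy estimate, i.e.\ that $\binom{k^2}{\varepsilon k^2}=2^{o(k^2)}$ as $\varepsilon\to 0$. I expect (ii) --- reconciling ``the scheme may err on a vanishing fraction of pairs'' with ``it must still encode a $1-o(1)$ fraction of the $\Theta(n^2)$ hidden adjacency bits'' --- to be the main conceptual point distinguishing this statement from the all-pairs version of Gavoille and Gengler, although it comes down in the end to a one-line estimate.
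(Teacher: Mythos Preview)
Your structural observation is sound: in a bipartite graph, adjacent vertices $a_i\in A$ and $b_j\in B$ share no common neighbour, so the direct edge is the unique $a_i$--$b_j$ path of length $<3$, and a stretch-$<3$ route must traverse it. Your treatment of the ``almost all pairs'' relaxation is also correct in spirit --- the entropy of an $o(n^2)$-sized error set is $o(n^2)$ --- and that is indeed the one new ingredient needed beyond Gavoille--Gengler.

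The gap is the extraction step. You assert that $\beta_{ij}(G)$ can be read off ``from the routing tables of $G$ alone'' and that two graphs with identical tables ``would produce the same path on every pair''. In the model used here (and in \cite{GG01}) this is false. The local routing function at $a_i$ outputs a \emph{port number}; which neighbour sits behind that port is fixed by the port assignment, which the scheme chooses freely for each graph and which is \emph{not} counted as routing memory. Two graphs $G,G'$ with identical tables but different port assignments will in general produce different vertex-paths, so $\beta_{ij}$ is not determined by the tables. Nor can you supply the port assignment as side information in the incompressibility argument: your family contains graphs with $\Theta(n^2)$ edges and vertices of degree $\Theta(n)$, so writing down the port-to-neighbour maps costs $\sum_u \log(\deg(u)!)=\Theta(n^2\log n)$ bits, swamping the $\Omega(n^2)$ bound you want. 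Your closing remark that ``the same count accommodates'' arbitrary port relabellings is precisely where the argument breaks.

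This is why the paper keeps Gavoille--Gengler's more elaborate tripartite construction $W\cup A\cup B$. Their recovery procedure records, for each $a_i$ and each $w_j\in W$, only the port \emph{number} that $a_i$'s table emits, and declares $(a_i,w_j)$ an edge when that number is unique in its row. The construction is engineered so that uniqueness of a port number --- a predicate computable from the tables without any port-to-neighbour map --- already forces adjacency (modulo one correctable exception), and so that each row of the recovered matrix has only a bounded fraction of missing $1$s, which can be filled in with small side information. The paper's modification then shows, much along the lines you sketch in your point (ii), that the additional $o(n^2)$ errors introduced by relaxing to ``almost all pairs'' can also be repaired with $o(n^2)$ extra bits.
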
 
\begin{proof} This can be obtained by a modification of the proof given by Gavoille and Genger, which proves the same statement when stretch $<3$ is required for all rather than almost all pairs. In the Appendix we describe the necessary modifications.  
\end{proof} 

\section{Discussion}

Many real networks have strongly concentrated distance distributions: Figure \ref{graphs2}  gives some examples derived from the Social Computing Data Repository.  Rather than real data, however, the focus of this paper has been on \emph{network models}.  We have seen here that most (if not all) of the well known small-world network models are at least weakly idemetric for a wide range of parameters. While the terminology itself was not used, this had previously been shown for  the Erd\H{o}s-R\'{e}nyi, Preferential Attachment,  Chung-Lu, and  Norros-Reittu models, as well as for bounded degree expanders. In Sections \ref{charsec} and \ref{pathf}, we gave proofs for the Watts-Strogatz and the Kleinberg models. Modulo the sparsity condition FED, in fact, we were able to show that being strongly idemetric is equivalent to the  weak expander condition PUMP for undirected and unweighted network models. As well as providing further evidence that it is a common property, this characterisation gives an easy way of providing short proofs that network models are idemetric.  It remains, however,  to thoroughly investigate the extent to which the prevalence of (weakly/strongly) idemetric models is reflected in real-world data. It would be of considerable interest to find that it is almost universally the case (as it seems to be for models) that real-world networks expand through mechanisms which cause the distance distributions to become proportionately more concentrated as the network grows. If, on the other hand, there are significant instances in which the distance distribution is seen not to remain concentrated as the size of the network increases,  then this raises an obvious question: can one define natural small-world models, which produce realistic graphs (in the sense that they produce graphs with the standard properties associated with real-world networks, such as being scale-free, having high clustering coefficients etc) and which are \emph{not} at least weakly idemetric?

\begin{figure}[h!]
\includegraphics[scale=0.50]{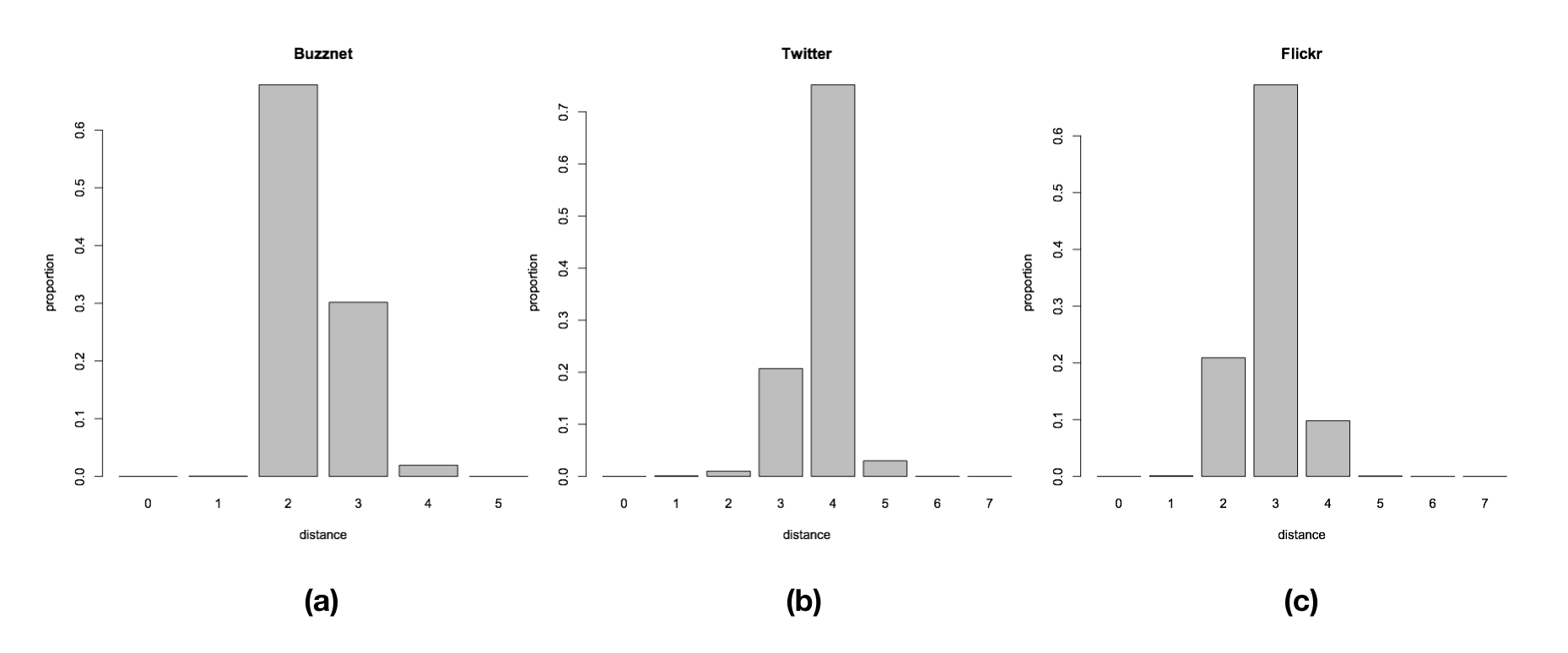}
\caption{These graphs show distance distributions for three datasets from the Social Computing Data Repository, which can be found at http://socialcomputing.asu.edu/pages/datasets. Distance is displayed on the $x$-axis, while the $y$-axis shows the proportion of pairs at that distance: (a)  Buzznet (101168 nodes, 4284534 edges); (b)  Twitter (11316811 nodes, 85331846 edges); (c) Flickr (80513 nodes, 5899882 edges). }
 \label{graphs2}
\end{figure}

\textbf{Data accessibility}. All experimental data is presented  in Figures \ref{graphs1} and \ref{graphs2} in its entirety. 

\textbf{Competing interests}. There are no competing interests to declare. 

\textbf{Authors' contributions}. All authors contributed to the construction of the proofs in this paper. 

\textbf{Acknowledgements}. None. 

\textbf{Funding statement}. Lewis-Pye was partially supported by a Royal Society University Research Fellowship. This research was partially supported by NSFC grant Nos 61772503 and 2014CB340302.  Barmpalias was supported by the 1000 Talents Program for Young Scholars from the Chinese Government No. D1101130.

\newpage

\section{Appendix} 

\subsection{The definition of the Watts-Strogatz model} \label{defWS} 

The model is defined via a two stage process. We begin with a regular ring lattice, i.e.\ $n$ nodes are arranged in a circle, such that each has an edge to the $m$ nearest nodes on each side. So if the nodes are labelled $0,\dots,n-1$, then there is an edge between distinct nodes $i$ and $j$ iff $|i-j|\leq m \text{ mod }n$.  For every node $i$, we then take each edge $(i,j)$ such that $0< (j-i) \text{ mod }n \leq m$ (i.e.\ such that $j$ is `to the right' of $i$) and `rewire' it with probability $p\in [0,1]$.  
Rewiring is done by replacing the edge $(i,j)$ with an edge $(i,k)$,  where $k$ is chosen with uniform probability from all possible values that avoid self-loops and multi-edges. Whether or not the edge is rewired, $i$ is referred to as the `fixed node' of the edge.

\subsection{The proof of Theorem \ref{WS}}

We use the same notation as in the definition of the Watts-Strogatz model in Section \ref{defWS}. So  $n$, $m$ and $p$ are as defined there. To prove the theorem, we must establish that when $pm>1$, the Watts-Strogatz model is both FED and a PUMP. 

\textbf{Establishing that the model is FED}.   When rewiring occurs and the edge $(i,j)$ is replaced with an edge $(i,k)$,  we refer to $k$  as the `new node' of the edge $(i,k)$.  If $u$ is chosen uniformly at random then $\text{deg}(u)$  can  be expressed as the sum of two terms $N(u) +F_n(u)$, where  $N(u) \sim m+\text{Bin}(m,1-p)$ and $F_n(u)$ is the number of edges for which $u$ is the new node.  

For reasons that will soon become clear, our first aim is to show that for every $\epsilon>0$, there exists a constant $c_{\epsilon}$ such that with probability $>1-e^{-c_{\epsilon}n}$:
 \begin{enumerate}
 \item[$(\dagger_{\epsilon})$] All nodes are of degree $<\epsilon n$.  
 \end{enumerate}
So suppose $\epsilon>0$. Since the total degree of all nodes is $2mn$, it immediately follows that at most a constant number $2mn/\epsilon n= 2m/\epsilon$ of nodes can have degree $\geq \epsilon n$. 
Consider the rewiring process as it proceeds around the ring, starting with node 0, then moving on to node 1, and so on. When we come to consider the rewirings for $v\neq u$, suppose that $v$ has degree $\leq \epsilon n$. In that case, no matter what has happened previously during the rewiring process, the probability that $v$ rewires an edge for which $u$ becomes the new node is less than $pm/(1-2\epsilon)n$ for all sufficiently large $n$. For all sufficiently large $n$, $F_n(u)$ is thus stochastically dominated by $X\sim 2m/\epsilon + \text{Bin}(n,pm/(1-2\epsilon)n)$. Applying Chernoff bounds, we conclude that for some constant $c$, the probability $u$ has degree $\geq \epsilon n$ is less than $e^{-cn}$. Since $u$ was chosen uniformly at random, the probability that at least one node has degree $\epsilon n$ is therefore bounded above by $n e^{-cn}$. We can thus choose $c_{\epsilon}$ as required.

 Next we want to show that $F_n(u)$  converges in distribution to $Y\sim \text{Poisson}(pm)$.\footnote{$ \text{Poisson}(pm)$ denotes a Poisson distribution with mean $pm$.}
To this end, consider again the rewiring process as it proceeds around the ring. Suppose $\epsilon>0$ and consider first the case that $\dagger_{\epsilon}$ holds. When we come to consider the rewirings for $v\neq u$, the probability that $v$ rewires an edge for which $u$ becomes the new node is less than $pm/(1-2\epsilon)n$, so long as $n$ is sufficiently large. For all sufficiently large $n$, $F_n(u)$ is thus stochastically strictly dominated by $Y_{\epsilon,n} \sim  \text{Bin}(n,pm/(1-2\epsilon)n)$ if we condition on satisfaction of $\dagger_{\epsilon}$. Since the probability that $\dagger_{\epsilon}$ fails is exponentially small, however, (and since $F_n(u)$ is always less than $mn$) we conclude that $F_n(u)$ is stochastically dominated by $Y_{\epsilon,n} \sim  \text{Bin}(n,pm/(1-2\epsilon)n)$ for all sufficiently large $n$, even  if we don't condition on satisfaction of $\dagger_{\epsilon}$. Let $\epsilon_n\rightarrow 0 $ be chosen such that $F_n(u)$ is stochastically dominated by $Y_{\epsilon_n,n}$, and define $Y_n= Y_{\epsilon_n,n}$.   As  $n\rightarrow \infty$,  $Y_n$ converges in distribution to $Y\sim \text{Poisson}(mp)$. In general, for any discrete random variables $X_n,Y_n$ and $Y$, if we have that (a) for all sufficiently large $n$, $Y_n$ stochastically dominates $X_n$, (b) every $X_n$ has mean $M$, and (c) $\mathbb{E}(Y_n)\rightarrow M$, while the sequence $Y_n$ converges in distribution to $Y$, then it follows that the sequence $X_n$ converges in distribution to $Y$. The expected degree of $u$ is $2m$, so since $\mathbb{E}(N(u))= 2m-pm$, this immediately implies  $\mathbb{E}(F_n(u))=mp$. It thus follows that $F_n(u)$ converges in distribution to $Y\sim \text{Poisson}(mp)$, as claimed. 

Let $Y\sim \text{Poisson}(mp)$, and define $\boldsymbol{p}(d)=\mathbb{P}(N(u)+Y=d)$. We have shown already that (F1)  from Definition \ref{FED} holds with respect to $\boldsymbol{p}$.  That (F2) holds is immediate, since $\sum_d d\boldsymbol{p}_{n}(d)=2m$, which is the mean of the distribution $\boldsymbol{p}$. \\

\textbf{Establishing that the model is a PUMP}. It suffices to verify that the definition is satisfied for all sufficiently small $\epsilon>0$. In order to see this, note that while the condition (i) (that there exists $r$ with $|B_u(r)|\geq \epsilon n$) is stronger for larger $\epsilon$, the fact that $\alpha_{\epsilon}>0$ means that when both (i)  and (ii) from the definition are satisfied with respect to $\epsilon>0$, (i) is also satisfied for all $\epsilon'>\epsilon$ with $\epsilon'<\frac{1}{2}$.

For each $u$ and $r\geq 1$, we let $B_r^{\ast}(u)=B_r(u)-B_{r-1}(u)$. First of all we look to establish the following lemma: 

\begin{lem}  \label{N} For all $\epsilon>0$, and $N\in \mathbb{N}$, there exists $M\in \mathbb{N}$ such that for all sufficiently large $n$,  if $u$ is chosen uniformly at random then $|B_M^{\ast}(u)|\geq N$ with probability $>1-\epsilon$. 
\end{lem}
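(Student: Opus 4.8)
The plan is to run a breadth-first exploration from $u$, revealing the rewiring of edges lazily, and to compare the growth of the frontier spheres $B_r^\ast(u)$ with a supercritical Galton--Watson process. The key point is that every node $v$ carries $m$ ``right'' edge-slots, each of which is, independently, rewired to an (essentially) uniformly random target with probability $p$; since $pm>1$, these shortcut slots alone already have mean offspring $pm>1$. Concretely, for any $r$ with $|B_r(u)|=o(\sqrt n)$, when we process the nodes of $B_r^\ast(u)$ we expose their $m|B_r^\ast(u)|$ right-slots, of which $\mathrm{Bin}(m|B_r^\ast(u)|,p)$ are rewired; each rewired target lies outside the explored set (which has constant size for bounded $r$) except with probability $O(1/n)$, and the targets are pairwise distinct except with probability $O(|B_r^\ast(u)|^2/n)$. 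Hence $|B_{r+1}^\ast(u)|$ stochastically dominates, up to a $1-o(1)$ factor, a sum of $|B_r^\ast(u)|$ i.i.d.\ $\mathrm{Bin}(m,p)$ variables. Iterating, for each fixed number of steps $t$ the sphere size dominates the $t$-th generation of a $\mathrm{Bin}(m,p)$-offspring branching process, which grows past any fixed $N$ within some fixed number of steps on its survival event.

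First I would make this domination precise (working, say, in the regime $|B_r(u)|\le n^{1/4}$, which is automatic for a bounded number of steps once $n$ is large), so that after a fixed number $M_2$ of steps the sphere has size $\ge N$ with probability at least the survival probability $\sigma>0$ of the branching process. The remaining issue is that $\sigma<1$, so a single root does not give probability $1-\epsilon$. To fix this I would first reach a frontier of constant size $k_0=k_0(\epsilon)$: the frontier $B_r^\ast(u)$ is empty only once $B_r(u)$ equals the whole component of $u$, and with probability $\ge 1-\epsilon/3$ (for $n$ large) $u$ lies in a component of size $\ge k_0$, while — since every node has degree $\ge m\ge 2$ and the shortcut targets are random — the probability that the exploration stays ``thin'' (frontier below $k_0$) for as many as $\ell_1$ consecutive steps decays with $\ell_1$, so a large $\ell_1$ makes this at most $\epsilon/3$. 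Conditioned on $|B_{\ell_1}^\ast(u)|\ge k_0$, I would run the domination above from each of these $\ge k_0$ frontier nodes in parallel; since the explored set stays of size $o(\sqrt n)$ the sub-explorations are asymptotically independent, so the probability that \emph{all} of them fail to reach size $\ge N$ within $M_2$ further steps is at most $(1-\sigma)^{k_0}+o(1)<\epsilon/3$ once $k_0$ is large. Taking $M=\ell_1+M_2$ then yields $|B_M^\ast(u)|\ge N$ with probability $>1-\epsilon$.

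The step I expect to be the main obstacle is obtaining the constant-sized frontier with probability $1-\epsilon$ rather than merely with positive probability: the extinction probability of the comparison branching process is a fixed constant, so this cannot come from the branching comparison alone and needs the genuinely graph-theoretic input that a breadth-first exploration in the rewired ring lattice with $pm>1$ cannot remain thin for many steps — equivalently, that $u$ almost surely does not sit in a small or path-like piece of the graph. I would handle this by a direct estimate: a thin exploration of depth $\ell_1$ forces a sequence of length $\Omega(\ell_1)$ of nodes that are simultaneously of near-minimal degree and whose shortcut slots all point back into the (small) already-explored set, an event of probability exponentially small in $\ell_1$ because each shortcut slot points to a uniform node; combined with the (standard, for $pm>1$) fact that the non-giant part of the graph has vanishing mass, this gives the $1-\epsilon/3$ bound.
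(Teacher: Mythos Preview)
Your route via a supercritical $\text{Bin}(m,p)$ branching comparison is quite different from the paper's argument, which is shorter and never mentions Galton--Watson. The paper follows only \emph{fixed-node} edges --- the $m$ out-edges each node owns whether or not they were rewired --- to define a sub-exploration $C_r(u)\subseteq B_r(u)$; it then sets $q_r$ equal to the number of nodes in $C_{r-2}$ with at least one rewired out-edge and argues that (i) with high probability $|C_r^{\ast}(u)|>2q_r$ and $C_r^{\ast}(u)\subseteq B_r^{\ast}(u)$ (each rewired edge out of $C_{r-2}$ lands at a fresh random node, which together with its $m\geq 2$ fixed-node out-edges contributes at least two fresh nodes to $C_r^{\ast}$), and (ii) $\mathbb{P}(q_r<N)\to 0$ as $r\to\infty$, since rewirings are independent and $|C_{r-2}|$ grows with $r$. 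Because every node carries exactly $m\geq 2$ fixed-node out-edges, the $C_r$ exploration is robust in a way your shortcuts-only process is not, so no extinction issue arises and no bootstrap is needed.

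The step you flag as the main obstacle is indeed a gap as written. Your justification --- that a thin exploration of depth $\ell_1$ forces $\Omega(\ell_1)$ nodes ``whose shortcut slots all point back into the already-explored set'' --- does not cover the case where the exploration stays thin simply because \emph{few} slots are rewired at all: a node has no rewired out-edge with probability $(1-p)^m>0$, and if this holds for all nodes met in the first $\ell_1$ steps the frontier is just the ring-lattice frontier of size $\approx 2m<k_0$, with no shortcut slot pointing anywhere. That event does have probability decaying in $\ell_1$, but not for the reason you state, and the intermediate regimes (a handful of rewirings producing a handful of ring-segments, each of bounded width) still need a separate argument. The natural way to close the gap is precisely the paper's device --- track fixed-node edges so the explored set grows regardless, then count rewired nodes inside it --- at which point the branching comparison and the bootstrap from $k_0$ become unnecessary.
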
 
\begin{proof} 
Since $pm>1$, we have $m\geq 2$. For $u$ which is chosen uniformly at random from $G_n$, and $r\in \mathbb{N}$, define $C_0(u)=\{u \}$.  Given $C_r(u)$, define $C_{r+1}(u)$ to be the union of  $C_r(u)$ and the set of all nodes $j$ such that there exists $i\in C_r(u)$ and an edge $(i,j)$ for which $i$ is the fixed node. For each $r\in \mathbb{N}$, define $C^{\ast}_{r+1}(u)=C_{r+1}(u)-C_r(u)$. For $r\geq 2$, let $q_r$ be the number of nodes $i\in C_{r-2}$ which are the fixed node of at least one edge which was rewired. For each $r\geq 2$ and each possible value of $q_r$ it holds with high probability that $|C^{\ast}_r(u)|>2q_r$, and that $C^{\ast}_r(u)\subseteq B^{\ast}_r(u)$.  For fixed $N$, the probability that $q_{r}<N$ tends to 0 as $r\rightarrow \infty$ (since the rewirings are independent events). So the result holds for $M$ large compared to $N$. 
\end{proof}

Now let $\epsilon>0$ be sufficiently small that $pm(1-\epsilon)>1$.  If $|B_u(r)^{\ast}|=N$, then let $v_1,\dots,v_N$ be an enumeration of $B_u(r)^{\ast}$.  In order to put a probabilistic lower bound on $|B_u(r+1)^{\ast}|$, we can take each $v_i$ in turn,  and consider the expected value of $\rho_i$, which is the number of nodes outside $B_u(r)$ which rewire an edge for which $v_i$ is the new node, and which do not rewire any edges for which some $v_j$ with $j<i$ is the new node.  If $|B_u(r+1)|<\epsilon n$, as we come to consider $v_i$, the expected value of $\rho_i$ is greater than $pm(1-\epsilon)$, no matter the values for $\rho_j$, $j<i$. We conclude that $|B_u(r+1)^{\ast}|$ stochastically dominates $X$ which is a sum of $N$ i.i.d\ random variables, each with mean $pm(1-\epsilon)$. If we choose $\beta>1$ with $\beta <pm(1-\epsilon)$, then, applying Chernoff bounds, we get that for some $I_{\beta}>0$, either  $|B_u(r+1)|\geq \epsilon n$, or else the probability that $|B_u(r+1)^{\ast}|<\beta |B_u(r)^{\ast}|$ is bounded above by $e^{-I_{\beta}N}$. Applying the same argument, and assuming that $|B_u(r+1)^{\ast}|\geq \beta |B_u(r)^{\ast}|$, we then have that either $|B_u(r+2)|\geq \epsilon n$, or else the probability that $|B_u(r+2)^{\ast}|<\beta |B_u(r+1)^{\ast}|$ is bounded above by $e^{-I_{\beta}N\beta}$. Iterating this argument, we conclude that if  $|B_u(M)^{\ast}|=N$ then the probability that there fails to exist any $r$ for which $B_u(r)\geq \epsilon n$, is bounded above by: 
\[ 
e^{-I_{\beta}N}+e^{-I_{\beta}N\beta}+ e^{-I_{\beta}N\beta^2}+ \dots \] 
For any $\epsilon>0$, the expression above is less than $\epsilon$ for all sufficiently large $N$. So, for sufficiently small $\epsilon>0$,  (i) from the definition of PUMP holds with probability $>1-\epsilon$ for all sufficiently large $n$, by Lemma \ref{N}. In fact, a more detailed analysis of the argument above allows us to draw a stronger conclusion. If  $|B_u(M)^{\ast}|=N$, and for all $r\in [M,M']$ we have $|B_{u}(r+1)^{\ast}|/|B_{u}(r)^{\ast}| \geq \beta$, then for each $\beta_1$ with $1<\beta_1<\beta$ it holds for all sufficiently large $M'$ that $|B_u(M')^{\ast}|/|B_u(M')|>1-\beta_1^{-1}$. We conclude that for all sufficiently small $\epsilon>0$, there exists $\alpha_1>0$ such that the following holds with probability $>1-\epsilon$ for all sufficiently large $n$: there exists $r$ with $|B_u(r)| \geq \epsilon n$ and such that $|B_u(r)^{\ast}|\geq \alpha_1n$. In order to conclude the proof, suppose $|B_u(r)^{\ast}|\geq \alpha_1n$, and under the assumption that $|B_{u}(r+1)|\leq (1-\epsilon)n$ consider the number of nodes outside $B_u(r)$ which rewire an edge for which the new node is in $B_u(r)^{\ast}$. For some $J_{\epsilon}>0$ the probability that this number is not at least $\epsilon \alpha_1 n$ is bounded above by $e^{-J_{\epsilon} n}$. Condition (ii) from the definition therefore holds for all sufficiently large $n$, if we set $\alpha_{\epsilon}=\epsilon$.


 \subsection{The proof of (\ref{lowpb})}
 In order to establish (\ref{lowpb}), note that $\mathbb{P}(u'=v)$ is minimised by taking $v$ at a maximum possible lattice distance from $u$, and by taking $r$ as large as possible. So it suffices to prove the result for $r=2$, and when $v$ is at a maximum lattice distance from $u$. In that case: 

\begin{equation} \label{lrprob} \mathbb{P}(u'=v)= d_{\ell}(u,v)^{-2}/ \sum_{v'\neq u} d_{\ell}(u,v')^{-2}. \end{equation}

Then we have: 
\begin{equation} \label{lrprob2}   \sum_{v'\neq u} d_{\ell}(u,v')^{-2} \leq \sum_{j=1}^{\sqrt{n}}(4j)(j^{-2}) =4\sum_{j=1}^{\sqrt{n}}j^{-1}\leq 4(1+\text{ln}(\sqrt{n}))\leq 4\text{ln}(n), \end{equation}  
where the second inequality follows from the general bound $\sum_{j=1}^x j^{-1} \leq (1+\text{ln}(x))$. Then (\ref{lrprob}) and (\ref{lrprob2}) combine to give (\ref{lowpb}),  as required.

\subsection{The proof of (\ref{cpbound})}
Recall that we are given an arbitrary node $u$ and an arbitrary set of nodes $C$. While we know which node $u$ is, and we know the elements of $C$, we do not know the value of  $u'$, which is the outbound long-range contact of $u$.   We want to show that if  $|C|\leq n^{2/3}$, then $ \mathbb{P}(u'\in C) =o(1). $

 In order to prove (\ref{cpbound}), we define the following sets of nodes: 
\[ H_i = \{ x: \ 2^{i-1}\leq d_{\ell}(u,x)<2^i \}, \ \ \ \ \ \  I_k = \{ x:\  d_{\ell}(u,x)=k \}.  \]

The number of nodes in $\bigcup_{j=1}^i H_j$ is $2^{i+1}(2^i-1)$, so long as $n$ is sufficiently large. We say that $H_i$ is \emph{complete} for $n$ when the size of $\bigcup_{j=1}^i H_j$  takes this maximum value. Similarly, we say that $I_k$ is complete for $n$ when it is of size $4k$.  In what follows we shall be concerned with $\mathbb{P}(u'\in H_i)$ and $\mathbb{P}(u'\in I_k)$ for various $i$ and $k$.  When we consider such values, the implicit assumption will always be that $n$ is sufficiently large that $H_i$ and $I_k$ are complete for $n$. 

To establish (\ref{cpbound}), our aim is to show that, for each $r\in [0,2)$, there exists $\kappa_r>0$ such that: 

\begin{equation} \label{ratio}
\frac{\mathbb{P}(u'\in H_{i+1})}{\mathbb{P}(u'\in H_i)} \geq 2^{\kappa_r} \text{      for all sufficiently large }i. 
\end{equation} 
If $H_i$ is complete for $n$ then $|\cup_{j=1}^i H_j| =2^{i+1}(2^i-1)$. Note also that $\text{lim}_{i\rightarrow \infty} 2^{i+1}(2^i-1)/(2\cdot 4^i)=1$.  It follows that for $\ell=\text{log}_4(n)$ there exists a constant $c$ such that, for all sufficiently large $n$: 
\[ \left| \cup_{j=1}^{\frac{2}{3}\ell +c} H_j \right| >n^{2/3}, \ \text{while simultaneously } H_{\ell-c} \text{ is complete for }n. \] 
Given (\ref{ratio}), this means that (\ref{cpbound}) holds when $C$ is chosen to be a set of $n^{2/3}$ many nodes which are as near as possible to $u$, and therefore holds for all $C$ with $|C|\leq n^{2/3}$. 

To prove (\ref{ratio}), choose $\beta$ with $0<\beta<2-r$. For each $I_k\subseteq H_i$ consider $I_{2k}\cup I_{2k+1}\subseteq H_{i+1}$. For all sufficiently large $k$: 
\begin{equation} \label{rat1} \frac{\mathbb{P}(u'\in I_{2k+1})}{\mathbb{P}(u'\in I_{2k})} >2^{-\beta}. \end{equation} 
Also: 
\begin{equation}  \label{rat2} \frac{\mathbb{P}(u'\in I_{2k})}{\mathbb{P}(u'\in I_{k})} =2^{1-r}. 
\end{equation} 
Combining (\ref{rat1}) and (\ref{rat2}), we get that for all sufficiently large $k$: 
\[ \frac{\mathbb{P}(u'\in I_{2k+1}) + \mathbb{P}(u'\in I_{2k}) }{\mathbb{P}(u'\in I_{k})} >2^{2-r-\beta},  \] 
so $\kappa_r= 2-r-\beta$ suffices. 

 \subsection{(D) Proving (\ref{lowerb}) for $B_{\ell^{\ast}}$}  We begin by defining the sets $D^{\ast}_i$, which are to $B_i^{\ast}$ as $C_i^{\ast}$ is to $A_i^{\ast}$.  As we enumerate the sets $D_i^{\ast}$, it is useful to monitor whether or not we have considered the long-range inbound contacts of each node before: all nodes start as \emph{unseen} and will be labelled \emph{seen} during the enumeration once we have considered their long-range inbound contacts. 
 If $b=(x_b,y_b)$,  we let $D^{\ast}_1= \{ (x_b,y_b,0) \}$. In order to enumerate $D^{\ast}_{i+1}$, we take each element $u=(x,y,z)$ of $D^{\ast}_i$ in turn and proceed as follows.

\begin{enumerate} 
\item  Enumerate into $D^{\ast}_{i+1}$ all 3-node lattice neighbours of $u$ which have not already been enumerated into $\cup_{j\leq i+1}D^{\ast}_j$. 
\item  We divide into two cases. If $(x,y)$ is not labelled as \emph{seen} then proceed according to (a) below.
Otherwise proceed according to (b).
\begin{enumerate} 
\item[(a)] Let $v_1, \dots , v_k$ be the inbound long-range contacts of $(x,y)$, where $v_j=(x_j,y_j)$. For each $j\in [1,k]$ in turn, let $z_j$ be such that no 3-nodes with third coordinate $z_j$ have been enumerated into $\cup_{j\leq i+1} D^{\ast}_j$ before, and enumerate $(x_j,y_j,z_j)$  into $D^{\ast}_{i+1}$ as an inbound long-range contact of $u$.  Label $(x,y)$ as \emph{seen}. 
\item[(b)] Let $k$ be sampled from a distribution $\text{Bin}(n,1/n)$ (independent from all other distributions considered). Let $z_1, . . . , z_k$ be distinct and such that, for each $i \in [1, k]$, no node with third coordinate $z_j$  has been enumerated into $\cup_{j\leq i+1} D^{\ast}_j$ before. For each $i \in  [1, k]$ enumerate  $(0, 0, z_i)$  into $D^{\ast}_{i+1}$ as an inbound long-range contact of $u$.
\end{enumerate}
\end{enumerate} 
 
We let $D_i=|D_i^{\ast}|$.  The definitions we gave previously for collision causing nodes, discounted nodes and descendants are applied to the 3-nodes in $\bigcup_i D_i^{\ast}$ in the obvious way -- replacing ``outbound'' everywhere with ``inbound'', replacing $A_i^{\ast}$ with $B_i^{\ast}$, and replacing $d(u,v)$ in the definition of descendant with $d(v,u)$ (we are now interested in 3-nodes as elements of $D_i^{\ast}$ rather than $C_i^{\ast}$, so these definitions \emph{replace} rather than extend the previous ones).

In order to prove (\ref{lowerb}) for $B_{\ell^{\ast}}$, it suffices to establish the following analogue of (\ref{nond}). For every $\alpha_0<\alpha$, the following holds with high probability: 
\begin{equation} \label{nondD}
\mathtt{nd}(D_{\ell^{\ast}})>\alpha_0^{\ell^{\ast}}.
\end{equation} 
The proof would proceed much as it did for (\ref{nond}), except that there are now \emph{two} non-deterministic factors influencing  $\mathtt{nd}(D_i)$: as well as the fact that 3-nodes may or may not be discounted, they may also have from $0$ to $n$ many inbound long-range contacts ($0$ to $n-1$ if $r\neq 0$). Whereas previously it followed directly from (\ref{limC}) that (\ref{Csmall}) held for all sufficiently large $n$, now we have to provide further proof that with high probability:
\begin{equation} \label{smallD} 
\left|\cup_{j\leq \ell^{\ast}}  D^{\ast}_{j}\right|<n^{2/3}. 
\end{equation} 
Only after (\ref{smallD}) is established will we be in a position to conclude that, with high probability, the  probabilistic bound (\ref{cpbound}) can be applied at all stages $i<\ell^{\ast}$. To prove (\ref{nondD}) we therefore first have to prove that, for every $\alpha_1>\alpha$, the following holds with high probability: 
\begin{equation} \label{nondDu}
\left|\cup_{j\leq \ell^{\ast}} D^{\ast}_{j}\right|<\alpha_1^{\ell^{\ast}}.
\end{equation} 
In order to see that (\ref{nondDu}) gives (\ref{smallD}), choose $\alpha_1>\alpha$ such that $\text{log}_{\alpha} (\alpha_1) <4/(3(1+\epsilon))$. Then  (\ref{nondDu}) gives that with high probability: 
\[ \left|\cup_{j\leq \ell^{\ast}} D^{\ast}_{j}\right| < \alpha_1^{\ell^{\ast}}= (\alpha^{\text{log}_{\alpha} \alpha_1})^{\ell^{\ast}}< \alpha^{\frac{4(1+\epsilon)\ell_n}{6(1+\epsilon)}}=n^{2/3}.\]

We define $\ell^{\dagger}$ to be the minimum of $\ell^{\ast}$ and the first $i$ such that $|\cup_{j\leq i+1} D_j^{\ast}| \geq n^{2/3}$. In order to establish that with high probability $\ell^{\dagger}=\ell^{\ast}$ and (\ref{nondDu}) holds, we can argue much as in the proof of (\ref{nond}). 
So suppose given $\alpha_1>\alpha$ and $\epsilon'$ with $0<\epsilon'<0.5$. To prove (\ref{nondDu})  it suffices to show, for  all sufficiently large $n$, that $\left|\cup_{j\leq \ell^{\ast}} D^{\ast}_{j}\right| <\alpha_1^{\ell^{\ast}}$ with probability $>1-\epsilon'$ . Choose $\alpha_2$ and $\beta$, with $\alpha<\alpha_2< \alpha_1$, $0<\beta<0.5$, and such that $\alpha_2$ is the largest root of $f(x,-\beta)$, where $f$ is as defined in (\ref{deff}). Now suppose we are given $\cup_{j\leq i} D^{\ast}_j$ for some $i<\ell^{\dagger}$ --- so for some $i<\ell^{\dagger}$ we are told precisely the elements of $ D^{\ast}_j$ for each $j\leq i$ (but we are not told the value of $\ell^{\dagger}$). Before we are told the long-range inbound contacts  of a given node $u\in D^{\ast}_i$ during the enumeration of $D^{\ast}_{i+1}$,  we do not know the outbound long range contacts of any nodes other than some of those we have enumerated into $\cup_{j\leq i+1} D^{\ast}_i$ already, and  the outbound long-range contacts of all
other nodes remain uniformly and independently distributed amongst the unseen nodes. The number of inbound long-range contacts of elements of $D_i^{\ast}$  is therefore stochastically dominated by $X$, which is the sum of $D_i$ i.i.d.\ random variables, each with distribution $\text{Bin}(n, 1/(n-n^{2/3}))$.  At the same time,  the number of inbound long-range contacts of elements of  $D_i^{\ast}$ stochastically dominates $X'$, which is the sum of $D_i$ i.i.d.\ random variables, each with distribution $\text{Bin}(n-n^{2/3}, 1/n)$. Applying Chernoff bounds, 
we conclude that for some $I_{\beta}>0$, and for all sufficiently large $n$, the probability that the number of inbound long-range contacts of elements of $D^{\ast}_i$ is outside the interval $[(1-\beta) D_i, (1+\beta)D_i]$  is bounded above by\footnote{Here $I_{\beta}$ has a different but similar definition to its previous use in the proof of (\ref{nond}). Similarly $\Pi_{\beta,\epsilon}$ has a similar but different definition here in the proof of (\ref{nondDu}).}: 
\[ e^{-I_{\beta}D_i}. \] 

We can then continue to argue much as in the proof of (\ref{nond}). Let  $i_0$ be such that $e^{-I_{\beta}4(i_0-1)}<\epsilon'/4$.  We chose $\beta<0.5$. So if the number of inbound long-range contacts of elements of $D^{\ast}_i$ is at least $(1-\beta)D_i$, we have $D_{i_0+1}\geq 1.5 D_{i_0}$, which means that  $e^{-I_{\beta}D_{i_0+1}}<\epsilon'/8$, and so on. We can also choose $N_0$ which depends on $\epsilon'$ but is independent of $n$, and such that $\mathbb{P}(D_{i_0}>N_0)<\epsilon'/2$.  Define $\Pi_{\beta,\epsilon'}$ to be the following event: 
$D_{i_0}\leq N_0$ and, for all $i$ with $i_0\leq i <\ell^{\dagger}$, the number of inbound long-range contacts of elements of $D^{\ast}_i$ is in the interval $[D_i(1-\beta), D_i(1+\beta)]$. The analysis above then gives, for sufficiently large $n$:
 \[ \mathbb{P}(\Pi_{\beta,\epsilon'})>1-\epsilon'.\]  
Consider the recurrence relation: 
 \[  E_{0}=N_0/(1-\beta) \ \ \ \ E_{i+1}=(1-\beta)E_i + \sum_{j\geq 1} E_{i-j}\cdot 4j(1-\beta).\]
So long as $\Pi_{\beta,\epsilon'}$ holds, we have: 
\[ \forall i \in [i_0,\ell^{\dagger}],\ \ \ D_i \leq E_{i-i_0+1}. \] 
Since there exists $\rho>0$ for which 
\[ \text{lim}_{i\rightarrow \infty} \frac{E_{i-i_0+1}}{\rho \alpha_2^i}=1, \] 
and since $\alpha_2<\alpha_1$, we can choose $i_1$ such that: 
\begin{equation} \label{alm} \forall i\in [i_1,\ell^{\dagger}], \ \ \ \left| \cup_{j\leq i} D^{\ast}_j  \right| <\alpha_1^i. \end{equation}
We also have that  $E_{\ell^{\ast}}<n^{2/3}$ for all sufficiently large $n$, which means that so long as $\Pi_{\beta.\epsilon'}$ holds and $n$ is sufficiently large, $\ell^{\dagger}=\ell^{\ast}$.  Thus (\ref{alm}) gives (\ref{nondDu}), as required.  

With (\ref{smallD}) established, we can then prove (\ref{nondD}) in almost exactly the same way that we proved (\ref{nond}).  Suppose given $\alpha_0<\alpha$ and $\epsilon'$ with $0<\epsilon'<0.5$. To prove (\ref{nondD})  it suffices to show, for  all sufficiently large $n$, that $\mathtt{nd}(D_{\ell^{\ast}}) >\alpha_0^{\ell^{\ast}}$ with probability $>1-\epsilon'$ . Choose $\alpha_1$ and $\beta$, with $\alpha_0<\alpha_1<\alpha$, $0<\beta<0.5$, and such that $\alpha_1$ is the largest root of $f(x,\beta)$, where $f$ is as in (\ref{deff}).  Let $\ell^{\dagger}$ be the minimum of $\ell^{\ast}$ and the first $i$ such that $|\cup_{j\leq i+1} D_j^{\ast}| \geq n^{2/3}$. By (\ref{nondDu}) it holds with high probability that $\ell^{\dagger}=\ell^{\ast}$. Suppose we are given $\mathtt{nd}(D^{\ast}_i)$ for some $i<\ell^{\dagger}$. 
Then, according to (\ref{cpbound}), for sufficiently large $n$ the number of inbound long-range contacts of elements of $\mathtt{nd}(D^{\ast}_i)$ which are not collision causing, stochastically dominates $X$ which is the sum of $\mathtt{nd}(D_i)$ many i.i.d\ random variables, each with mean $1-\beta/2$.  
Applying the Chernoff bound to $X$, we conclude that for some $I_{\beta}>0$, and for all sufficiently large $n$, the probability that we fail to have at least  $(1-\beta)\mathtt{nd}(D_i)$ many long-range inbound contacts of elements of $\mathtt{nd}(D^{\ast}_i)$ which are not  collision causing is bounded above by: 
\[ e^{-I_{\beta}\mathtt{nd}(D_i)}. \] 

Now $\mathtt{nd}(D_i)$ is guaranteed to grow at a certain rate, since $\mathtt{nd}(D_i)\geq 4(i-1)$ for $i\geq 2$. We can therefore choose $i_0$ such that $e^{-I_{\beta}4(i_0-1)}<\epsilon'/2$, and this means that with probability 1, 
\[ e^{-I_{\beta}\mathtt{nd}(D_{i_0})}<\epsilon'/2.\] 
 We chose $\beta<0.5$. So if at most a $\beta$ proportion of the long-range contacts of elements of $\mathtt{nd}(D_{i_0}^{\ast})$ are collision causing, we have $\mathtt{nd}(D_{i_0+1})\geq 1.5 \mathtt{nd}(D_{i_0})$. Since $\epsilon'<0.5$ this gives: 
 \[ e^{-I_{\beta}\mathtt{nd}(D_{i_0+1})}<(\epsilon'/2)^{1.5}=\epsilon'\sqrt{2\epsilon'}/4<\epsilon'/4.\]  
Then so long as at least $(1-\beta)\mathtt{nd}(D_{i_0+1})$ many inbound long-range contacts of elements of $\mathtt{nd}(D_{i_0+1}^{\ast})$ are not collision causing, we have $ e^{-I_{\beta}\mathtt{nd}(D_{i_0+2})}<\epsilon'/8$, and so on. Define $\Pi_{\beta,\epsilon'}$ to be the following event: 
for all $i$ with $i_0\leq i <\ell^{\dagger}$, at least $(1-\beta)\mathtt{nd}(D_{i})$ many inbound long-range contacts of elements of  $\mathtt{nd}(D_{i}^{\ast})$ are not collision causing.  
The analysis above then gives, for sufficiently large $n$:
 \[ \mathbb{P}(\Pi_{\beta,\epsilon'})>1-\epsilon'.\]  
 To complete the argument, consider all those 3-nodes at a lattice distance of $i_0-1$ from $b$. These  3-nodes are guaranteed to belong to $\mathtt{nd}(D^{\ast}_{i_0})$. So long as $\Pi_{\beta,\epsilon'}$ holds, we can give a lower bound for the number of descendants of these nodes which belong to $\mathtt{nd}(D^{\ast}_{i_0+i-1})$ for each $i\leq \ell^{\dagger} -i_0+1$ through the series: 
 \begin{equation}   E_{0}=4(i_0-1)/(1-\beta) \ \ \ \ E_{i+1}=(1-\beta)E_i + \sum_{j\geq 1} E_{i-j}\cdot 4j(1-\beta).\end{equation} 
 There exists a constant $\rho>0$ such that $\text{lim}_{i\rightarrow \infty}  E_i/(\rho \alpha_1^i)=1 $. Since $\alpha_0<\alpha_1$, we can choose $i_1>i_0$ such that $\mathtt{nd}(D_{\ell^{\ast}}^{\ast})>\alpha_0^{\ell^{\ast}}$ whenever $\Pi_{\beta,\epsilon'}$ holds, $\ell^{\dagger}=\ell^{\ast}$  and $\ell^{\ast}>i_1$.  This gives (\ref{nondD}) as required.

\subsection{Modifications required to change the proof of \cite{GG01} into a proof of Theorem \ref{LB}}
 We assume full knowledge of the proof of Gavoille and Gengler in \cite{GG01}. In that paper, 
 Lemma 1 gives an inequality concerning the ratio between binomial coefficients, which we now replace with the following, which can also be proved using Stirling's approximation. 
 \begin{lem} \label{binbound} Suppose $\pi(q):\mathbb{N}\rightarrow \mathbb{N}$ satisfies the condition that $\pi(q)/q\rightarrow 0$ as $q\rightarrow \infty$. Then there exists a real $\gamma >1$ and a constant $d>0$ such that for all $q$ which are multiples of 5: 
 \[ 
\frac{ {q \choose 4q/5}}{ {4q/5 \choose \pi(q)}{q \choose \pi(q)}{2q/5 \choose q/5}}>d\cdot \gamma^{q} 
\] 
 \end{lem}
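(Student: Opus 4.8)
The plan is to reduce the whole inequality to Stirling's approximation (as the statement itself suggests), and ultimately to the single elementary fact that $\log_2 5 > 2$. First I would record the standard two-sided entropy bound $\tfrac{1}{q+1}2^{qH(k/q)} \le \binom{q}{k} \le 2^{qH(k/q)}$, where $H(x) = -x\log_2 x - (1-x)\log_2(1-x)$ is the binary entropy function, together with the cruder estimate $\binom{n}{k}\le (en/k)^k$. Applying the lower half of the entropy bound to the numerator gives $\binom{q}{4q/5}\ge \tfrac{1}{q+1}2^{qH(4/5)}$, and the upper half applied to one denominator factor gives $\binom{2q/5}{q/5}\le 2^{(2q/5)H(1/2)} = 2^{2q/5}$, since $H(1/2)=1$.

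Next I would control the two factors involving $\pi(q)$. Writing $t=\pi(q)/q$, the hypothesis is exactly $t\to 0$, and since $s\log_2(1/s)\to 0$ as $s\to 0^{+}$ one obtains $\log_2\binom{q}{\pi(q)}\le \pi(q)\log_2\!\big(eq/\pi(q)\big) = q\big(t\log_2(1/t) + t\log_2 e\big) = o(q)$; the same bound (in fact a smaller one) holds for $\binom{4q/5}{\pi(q)}$. Hence there is a function $\delta(q)\to 0$ with $\binom{4q/5}{\pi(q)}\binom{q}{\pi(q)}\le 2^{\delta(q)q}$ for all large multiples of $5$. (For small $q$ these coefficients are $\ge 1$ unless one of them vanishes, in which case the ratio is $+\infty$ and there is nothing to check.)

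Putting the pieces together, for all sufficiently large multiples $q$ of $5$ the left-hand ratio is at least $\dfrac{2^{qH(4/5)}/(q+1)}{2^{\delta(q)q}\cdot 2^{2q/5}} = \dfrac{1}{q+1}\,2^{q\left(H(4/5) - 2/5 - \delta(q)\right)}$, and the key computation is $H(4/5) - 2/5 = \big(\log_2 5 - \tfrac{8}{5}\big) - \tfrac{2}{5} = \log_2 5 - 2 > 0$. So I would fix any real $c$ with $0<c<\log_2 5-2$ and set $\gamma:=2^{c}$ (note $1<\gamma<5/4$); for every $q$ large enough that $\delta(q)<\log_2 5-2-c$ and $2^{(\log_2 5-2-c-\delta(q))q}>q+1$, the displayed lower bound exceeds $\gamma^{q}$. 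Finally, since only finitely many multiples of $5$ lie below that threshold and for each the ratio is a fixed positive real (or $+\infty$), taking $d:=\min\{1,\ \min_q(\text{ratio at }q)/\gamma^{q}\}$ over that finite set gives $\text{ratio}>d\,\gamma^{q}$ for every multiple of $5$, as required.

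The computation is entirely routine; the only points needing a little care are the uniformity of the $o(q)$ estimate for the $\pi(q)$-factors from the merely pointwise hypothesis $\pi(q)/q\to 0$ (handled by the $s\log_2(1/s)\to 0$ remark), and the bookkeeping that makes a single pair $(\gamma,d)$ work for all multiples of $5$ rather than only the large ones. I do not anticipate a genuine obstacle here — the substantive content is just that $\log_2 5 - 2$ is a strictly positive constant.
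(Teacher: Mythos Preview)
Your proposal is correct and is exactly the argument the paper has in mind: the paper gives no proof beyond the remark that the bound ``can also be proved using Stirling's approximation,'' and your entropy-bound computation (with the decisive inequality $H(4/5)-2/5=\log_2 5-2>0$) is precisely the standard way to cash that out. The handling of the $\pi(q)$-factors via $\binom{n}{k}\le (en/k)^k$ and $s\log_2(1/s)\to 0$, and the passage from ``all large $q$'' to ``all multiples of $5$'' by absorbing finitely many cases into $d$, are both fine.
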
 
 
Gavoille and Gengler  specify for each $n$ a set of graphs with $n$ nodes, where the nodes are partitioned into three sets $V= W \cup A \cup B$, where $|W|=q$ and $|A|=|B|=p$. Each possible graph is identified by a matrix $M$ which specifies which nodes in $A$ and $W$  are neighbours (by construction this suffices to give the same information for $B$ and $W$).  The basic form of the argument involves showing that, from a routing function with stretch $< 3$, $M$ can be recovered using only a small amount of extra information. The construction of $M$ from the routing function consists of three steps. 
In the first step a matrix $M'$ is produced, in which entries are port numbers given by the routing function as responses to certain appropriately defined queries. In the next step a matrix, let us call it $M_0$ (although it is not named in the presentation given in \cite{GG01}), is produced from $M'$ by setting each entry to 1 if its numeral value is unique amongst the values in that row, and setting it to 0 otherwise. A crucial feature of $M_0$ is that any entries which are set to 1 are `correct', in the sense that the corresponding entry is 1 in $M$.\footnote{There is a small oversight in \cite{GG01}, that when the routing function for $a_i$ is being queried, actually the port number for $b_i$ could appear as a unique entry in the corresponding row, meaning that a non-neighbour of $a_i$ in $W$ is incorrectly marked as a neighbour. This is easily corrected with information specifying the port label for $a_i$ corresponding to the edge to $b_i$.} Now that we work with a routing function which only gives stretch $<3$ for \emph{almost all} pairs, however, the difficulty is that a matrix $M_1$ will be produced in place of $M_0$, which may not satisfy this correctness condition for 1 entries. It suffices to show that only a small amount of information is required in order to convert $M_1$ into $M_0$.  We can carry out 
this conversion in two stages.

Stage 1.  In the first stage, we take each row of $M_1$ in turn, and change any 1s which are 0s in $M_0$ into 0s. The number of port labels for nodes in $A$, means that any row of  $M_1$ has less than $c=4q/5$ many 1s. Note also that, as observed in \cite{GG01},  any row of $M_0$ has at most $2q/5$ many 0s.  In  all but $o(p)$ of the rows, specifying which 1s should be flipped requires specifying at most $o(q)$ many of the (less than $4q/5$ many) 1s in that row of $M_1$ (since stretch $<3$ holds for almost almost all pairs). For $o(p)$ of the rows, however,  we may have to specify up to $2q/5$ many 1s to change to 0. 
Overall, for some function $\pi:\mathbb{N}\rightarrow \mathbb{N}$ such that  $\pi(q)/q\rightarrow 0$ as $q\rightarrow \infty$, this gives the following  upper bound for the information required to carry out the first stage: 
\[ p\text{log}{4q/5 \choose \pi(q)} + o(p) \text{log}{q \choose 2q/5}. \] 

Stage 2. In the second stage we take the modified form of $M_1$ resulting from Stage 1 (in an abuse of notation, we shall still refer to it as $M_1$), and we change any 0s which are 1s in $M_0$ into 1s. Arguing similarly to as in Stage 1, we get a (generous) upper bound on the amount of information required in order to specify the appropriate bits to be flipped: 
\[ p\text{log}{q \choose \pi(q)} + o(p) \text{log}{q \choose 4q/5}. \] 

With these modifications in place, we need to consider how they affect the bound on $K$, which is the number of bits required to store the routing function. In \cite{GG01} the relevant bound is: 
\begin{equation} \label{oldbound}
K\geq p\text{log}\left( {q \choose 4q/5} \ \Big/ \ {2q/5 \choose q/5} \right) - O(n\text{log}(n)).
\end{equation}
In place of (\ref{oldbound}) we now have: 
\begin{equation} 
K\geq p\text{log}\left( {q \choose 4q/5} \ \Big/ {4q/5 \choose \pi(q)}{q \choose \pi(q)}{2q/5 \choose q/5} \right) -   o(p) \text{log}{q \choose 2q/5}- o(p) \text{log}{q \choose 4q/5} - O(n\text{log}(n)).
\end{equation}
By Lemma \ref{binbound}, and since $o(p) \text{log}{q \choose 2q/5}$ and $ o(p) \text{log}{q \choose 4q/5}$ are both $o(n^2)$, it still holds that $K=\Omega (n^2)$, as required.

  \end{document}